\newcommand{\cal}{\mathcal}
\newcommand{\whp}{\textit{whp}\xspace}
\newtheorem{thm}{Theorem}  
\newtheorem{lem}[thm]{Lemma}
\newtheorem{rem}{Remark}
\newtheorem{pro}[thm]{Proposition}
\numberwithin{thm}{section}
\numberwithin{rem}{section}
\newcommand{\thmref}[1]{Theorem~\ref{thm:#1}}
\newcommand{\proref}[1]{Proposition~\ref{pro:#1}}
\newcommand{\lemref}[1]{Lemma~\ref{lem:#1}}
\newcommand{\lemrefs}[2]{Lemmas~\ref{lem:#1} and~\ref{lem:#2}}
\newcommand{\figref}[1]{Figure~\ref{fig:#1}}
\newcommand{\secref}[1]{Section~\ref{sec:#1}}
\newcommand{\eq}[1]{equation~\eqref{eq:#1}}
\renewcommand{\Pr}[1]{\ensuremath{\operatorname{\mathbf{Pr}}\left[#1\right]}}
\newcommand{\Pro}[1]{\ensuremath{\operatorname{\mathbf{Pr}}\left[#1\right]}}
\newcommand{\Ex}[1]{\ensuremath{\operatorname{\mathbf{E}}\left[#1\right]}}
\newcommand{\bal}{\mathsf{b}}
\newcommand{\Even}{\mathsf{Even}}
\newcommand{\Odd}{\mathsf{Odd}}
\newcommand{\ccc}{\ensuremath{\mathsf{CCC}}}
\newcommand{\Range}[1]{\ensuremath{\operatorname{\mathbf{Range}}\left[#1\right]}}
\newcommand{\bx}{\ensuremath{\mathbf{x}}}
\newcommand{\by}{\ensuremath{\mathbf{y}}}
\newcommand{\cB}{\ensuremath{\mathcal{B}}}
\newcommand{\one}{\textbf{1}}
\newcommand{\cM}{\ensuremath{\mathcal{M}}}
\newcommand{\bP}{\ensuremath{\mathbf{P}}}
\newcommand{\bA}{\ensuremath{\mathbf{A}}}
\newcommand{\bB}{\ensuremath{\mathbf{B}}}
\newcommand{\bQ}{\ensuremath{\mathbf{Q}}}
\newcommand{\Ml}{\ensuremath{M^{(i)}}}
\newcommand{\MlOdd}{\ensuremath{M^{(i)}_{\Odd}}}
\newcommand{\MlEven}{\ensuremath{M^{(i)}_{\Even}}}
\newcommand{\Mi}{\ensuremath{M^{(i)}}}
\newcommand{\MtOdd}{\ensuremath{M^{(t)}_{\Odd}}}
\newcommand{\MtEven}{\ensuremath{M^{(t)}_{\Even}}}
\newcommand{\Mtt}{\ensuremath{M^{[t_1\!,t_2]}}}
\newcommand{\MttOdd}{\ensuremath{M^{[t_1\!,t_2]}_{\Odd}}}
\newcommand{\myM}{\ensuremath{\mathcal{M}}}
\newcommand{\Oh}{\mathcal{O}}
\newcommand{\tsum}{\textstyle\sum}
\newcommand{\tprod}{\textstyle\prod}
\DeclareSymbolFont{AMSb}{U}{msb}{m}{n}
\newcommand{\Q}{{\mathbb{Q}}}
\newcommand{\Z}{{\mathbb{Z}}}
\newcommand{\R}{{\mathbb{R}}}
\def\mod{\operatorname{mod}}
\let\oldsqrt\sqrt
\def\hksqrt{\mathpalette\DHLhksqrt}
\def\DHLhksqrt#1#2{\setbox0=\hbox{$#1\oldsqrt{#2\,}$}\dimen0=\ht0
   \advance\dimen0-0.2\ht0
   \setbox2=\hbox{\vrule height\ht0 depth -\dimen0}%
   {\box0\lower0.4pt\box2}}
\renewcommand\sqrt\hksqrt
\renewcommand{\leq}{\leqslant}
\renewcommand{\geq}{\geqslant}
\renewcommand{\epsilon}{\varepsilon}
\newcommand{\alphasmoothed}{$\alpha$\nobreakdash-perturbed}
\title[Smoothed Analysis of Balancing Networks]{Smoothed Analysis of Balancing Networks$^*$}
\thanks{$^*$ A conference version~\citep{FriedrichSV09} appeared in the
    36th International Colloquium on Automata, Languages and Programming (ICALP 2009).    
    This work was done while the first two authors
    were postdoctoral fellows at International Computer Science Institute Berkeley supported
    by the German Academic Exchange Service (DAAD)
    and the third author was a postdoctoral fellow
    at the Computer Science Division of the University of California Berkeley.}
\author{Tobias Friedrich$^1$}
\address{$^1$ Max-Planck-Institut f\"ur Informatik, Saarbr\"ucken, Germany}
\author{Thomas Sauerwald$^2$}
\address{$^2$ Simon Fraser University, Burnaby, Canada}
\author{Dan Vilenchik$^3$}
\address{$^3$ 
Department of Mathematics, University of California Los Angeles, CA, USA}
\begin{document}

\begin{abstract}
In a balancing network each processor has an initial collection
of unit-size jobs (tokens) and in each round, pairs of processors connected by balancers
split their load as evenly as possible.
An excess token (if any) is placed according to some predefined rule. As it turns out, this rule crucially affects
the performance of the network. In this work we propose a model that studies this effect. We suggest a model bridging
the uniformly-random assignment rule, and the arbitrary one (in the spirit of smoothed-analysis). We start with an arbitrary assignment of balancer directions and then flip each assignment with probability $\alpha$ independently.
For a large class of balancing networks our result implies that
after $\Oh(\log n)$ rounds the discrepancy is $\Oh( (1/2-\alpha) \log n + \log \log n)$ with high probability. This matches and generalizes known upper bounds for $\alpha=0$ and $\alpha=1/2$. We also show that a natural network matches the upper bound for any $\alpha$.
\end{abstract}

\maketitle


\section{Introduction}

\begin{figure}[t]
    \centering
    \begin{picture}(0,0)%
    \includegraphics{bf16.pstex}%
    \end{picture}%
    \setlength{\unitlength}{4144sp}%
    \begingroup\makeatletter\ifx\SetFigFont\undefined%
    \gdef\SetFigFont#1#2#3#4#5{%
      \reset@font\fontsize{#1}{#2pt}%
      \fontfamily{#3}\fontseries{#4}\fontshape{#5}%
      \selectfont}%
    \fi\endgroup%
    \begin{picture}(3676,2368)(0,-1483)
    \end{picture}%
    \caption{The network $\ccc_{16}$.}
    \label{fig:ccc16}
\end{figure}
In this work we are concerned with two topics whose name contains the word
``smooth'', but in totally different meaning. The first is \emph{balancing (smoothing)
networks}, the second is \emph{smoothed analysis}. Let us start by introducing
these two topics, and then introduce our contribution -- interrelating the two.

\subsection{Balancing (smoothing) networks}

In the standard abstraction of \emph{smoothing} (balancing) networks~\citep{AHS94}, processors are modeled as the
vertices of a graph and connection between them as edges. Each process has an initial collection
of unit-size jobs (which we call tokens). Tokens are routed through the network by transmitting
tokens along the edges according to some local rule. We measure the quality of such a balancing procedure by the maximum difference between the number of tokens at any two vertices at the end.

The local scheme of communication we study is a \emph{balancer} gate: the number
of tokens is split as evenly as possible between the communicating vertices with
the excess token (if such remains) routed to the vertex towards which the
balancer points. More formally, the balancing network consists of $n$ vertices
$v_1,v_2,\ldots, v_n$, and $m$ matchings (either perfect or not)
$M_1,M_2,\ldots,M_m$. We associate with every matching edge a balancer gate
(that is, we think of the edges as directed edges). At the beginning of the first
iteration, $x_j$ tokens are placed in vertex $v_j$, and at every iteration
$r=1,\ldots,m$, the vertices of the network perform a balancing operation
according to the matching $M_r$ (that is, vertices $v_i$ and $v_j$ interact if
$(v_i,v_j) \in M_r$).

One motivation for considering smoothing networks comes from the
server-client world. Each token represents a client request for some service;
the service is provided by the servers residing at the vertices. Routing tokens through the network must
ensure that all servers receive approximately the same number of tokens, no
matter how unbalanced the initial number of tokens is (cf.~\citep{AHS94}).
More generally, smoothing networks are attractive for
multiprocessor coordination and load balancing applications where low-contention
is a requirement; these include \textit{producers-consumers} \citep{HS08} and
distributed numerical computations \citep{BT97}. Together with \emph{counting
networks}, smoothing networks have been studied quite extensively since
introduced in the seminal paper of \citet{AHS94}.

\medskip

\citet{HT06,HT06b} initiated the study of the
{\it \ccc\ network} (cube-connected-cycles, see \figref{ccc16}) as a smoothing
network\footnote{Actually, they considered the so-called {\it block network}. However, it was observed in \cite{MS08} that the block network is isomorphic to the \ccc-network and therefore we will stick to the latter in the following.}. For the special case of the $\ccc$, sticking to previous conventions, we adopt a ``topographical''
view of the network, thus calling the vertices \emph{wires}, and looking at the left-most side of the network as the ``input'' and the right-most as the ``output''.
In the $\ccc$, two wires at layer $\ell$ are connected by a balancer if the respective
bit strings of the wires differ exactly in bit $\ell$. The \ccc\ is
a canonical network in the sense that it has the smallest possible depth
(number of rounds)
of $\log n$ as a smaller depth cannot ensure any discrepancy independent of the initial one.
Moreover, it has been used in more advanced constructions
such as the {\it periodic (counting) network}~\citep{AHS94,DPRS89}.

As it turns out, the initial setting of the balancers' directions is crucial.
Two popular options are an arbitrary orientation or a uniformly random one.
A maximal discrepancy of $\log n$ was established for the $\ccc_n$ for an arbitrary initial orientation~\citep{HT06b}.
For a random initial orientation of the $\ccc_n$, \citep{HT06} show a discrepancy of at most $2.36 \sqrt{\log n}$ for the $\ccc_n$
(this holds \whp\footnote{Writing $\whp$ we mean
with probability tending to $1$ as $n$ goes to infinity.} over the random initialization), which was improved
by \citet{MS08} to $\log \log n + \Oh(1)$ (and a matching lower bound).

Results for more general networks have been derived in \citet{RSW98} for
arbitrary orientations. For expander graphs, they show an $\Oh(\log n)$-discrepancy
after $\Oh(\log (Kn))$ rounds when $K$ is the discrepancy of the initial load vector. This was recently strengthened assuming
the orientations are set randomly and in addition the matchings themselves are
chosen randomly \citep{FS09}. Specifically, for expander graphs one can achieve within the same number of rounds a discrepancy of only $\Oh(\log \log n)$.

\subsection{Smoothed analysis}

Let us now turn to the second meaning of ``smoothed''. Smoothed analysis comes to
bridge between the random instance, which typically has a very specific
``unrealistic'' structure, and the completely arbitrary instance, which in many
cases reflects just the worst case scenario, and is thus over-pessimistic in
general. In the smoothed analysis paradigm, first an adversary generates an
input instance, then this instance is randomly perturbed.

The smoothed analysis paradigm was introduced by \citeauthor{SpielmanT01} in 2001
\citep{SpielmanT01} to help explain why the simplex algorithm for linear
programming works well in practice but not in (worst-case) theory.  They
considered instances formed by taking an arbitrary constraint matrix and
perturbing it by adding independent Gaussian noise with variance $\varepsilon$
to each entry.  They showed that, in this case, the shadow-vertex pivot rule
succeeds in expected polynomial time. Independently, \citet{BohmanFM03}
studied the issue of Hamiltonicity in a dense graph when
random edges are added. In the context of graph optimization problems we can also
mention \cite{KrivSudakovTetail,DiamOfRandGraph}, in the context of $k$-SAT
\cite{FeigeSmoothedRefute,Walksat}, and in various other problems
\cite{smoothedIP,smoothedTrees,smoothedPolytope,smoothedKMeans}.

Our work joins this long series of papers studying perturbed instances in a
variety of problems. Specifically in our setting we study the following
question: what if the balancers were not set completely adversarially but also
not in a completely random fashion. Besides the mathematical and analytical
challenge that such a problem poses, in real network applications one may not
always assume that the random source is unbiased, or in some cases one will not
be able to quantitatively measure the amount of randomness involved in the
network generation. Still it is desirable to have an estimate of the typical
behavior of the network. Although we do not claim that our smoothed-analysis
model captures all possible behaviors, it does give a rigorous and tight
characterization of the tradeoff between the quality of load balancing and the
randomness involved in setting the balancers' directions, under rather natural
probabilistic assumptions.

As far as we know, no smoothed analysis framework was suggested to a networking
related problem.
Formally, we suggest the following framework.


\subsection{The Model}\label{sec:TheModel}

We define both the smoothed-analysis aspect of the model, and the load-balancing
one. For the load balancing part, our model is similar to (and, as we will shortly explain, a generalization of) the periodic balancing
circuits studied in \citep{RSW98}. We think of the balancing network in terms of an $n$-vertex graph.
The processors in the network are the vertices of the graph,
and balancers connecting processors are the (directed) edges of the graph.

Before we proceed, since what follows is
somewhat heavy on notation and indices, it will be helpful for the reader to
bear in mind the following legend: we use superscripts (in round brackets) to
denote a time stamp, and subscripts to denote an index. In subscripts, we use the vertices
of the graph as indices (thus assuming some ordering of the vertex set).
For example, $\mathbf{A}^{(i)}_{u,v}$ stands for the $(u,v)$-entry in matrix $\mathbf{A}^{(i)}$, which corresponds to time/round $i$.

Let $M^{(1)},\ldots,M^{(T)}$ be an arbitrary sequence of $T$
(not necessarily perfect) matchings. With each matching $\Mi$ we associate a matrix $\bP^{(i)}$ with
$\bP^{(i)}_{uv}=1/2$ if $u$ and $v$ are matched in $\Mi$,
$\bP^{(i)}_{uu}=1/2$ if $u$ is matched in $\Mi$,
$\bP^{(i)}_{uu}=1$ if $u$ is not matched in $\Mi$,
and $\bP^{(i)}_{uv}=0$ otherwise.

In round $i$, every two vertices matched in $\Mi$ perform a balancing operation.
That is, the sum of the number of tokens in both vertices is split evenly between
the two, with the remaining token (if exists) directed to the vertex pointed by
the matching edge.

\begin{rem}
    In \textbf{periodic balancing networks} (see \citep{RSW98} for example) an ordered
    set of $d$ (usually perfect) matchings is fixed. Every round of balancing is a
    successive application of the $d$ matchings. Our model is a (slight) generalization of the latter.
\end{rem}

Let us now turn to the smoothed-analysis part. Given a balancing network consisting of a set $T$
of directed matchings, an\textbf{ $\alpha$\nobreakdash-perturbation} of the network is a flip of direction for every edge
with probability $\alpha$ independently of all other edges. Setting $\alpha=0$ gives the completely ``adversarial model'', and $\alpha=1/2$ is the uniform random case.

\begin{rem} For our results, it suffices to consider $\alpha \in [0,1/2]$. The case $\alpha > 1/2$ can be reduced
to the case $\alpha \leq 1/2$ by flipping the initial orientation of all balancers and taking $1-\alpha$ instead of $\alpha$.
It is easy to see that both distributions are identical.
\end{rem}


\subsection{Our Contribution}
For a load vector $\bx$, its discrepancy is defined to be $\max_{u,v} |\bx_u-\bx_v|$. 
We use $e_u$ to denote the unit vector whose all entries are 0 except the $u^{th}$. For a matrix $A$, $\lambda(A)$ stands for the second largest eigenvalue of $A$ (in absolute value).
Unless stated otherwise, $\|z\|$ stands for the $\ell_2$-norm of the vector $z$. In the following, we will assume an ordering of the vertices from $1$ to $n$. When we write $(u,v) 
\in E$, we refer to the case where $u$ and $v$ are connected by an undirected edge and $u < v$.

\begin{thm} \label{thm:upper}
Let $G$ be a balancing network with matchings $M^{(1)},\ldots,M^{(T)}$.
For any two time stamps $t_1,t_2$ satisfying $t_1 < t_2\leq T$,
and any input vector with initial discrepancy $K$, the discrepancy at time step $t_2$ in
\alphasmoothed\ $G$ is \whp at most
   \begin{align*}
   	  (t_2-t_1)
   	  + 3 \left(\tfrac{1}{2} - \alpha \right) t_1 +\Lambda_1 + \Lambda_2,
   \end{align*}
   where
   \begin{align*}
        \Lambda_1 &= \max_{w \in V} \sqrt{\log n \, \tsum_{i=1}^{t_1}
           \tsum_{[u:v]\in \Ml}
           \left((e_u-e_v)^{T} \left(\tprod_{j=i+1}^{t_2}
           \bP^{(i)}\right) e_w \right)^2},    \\
        \Lambda_2 &=  \lambda \left(\tprod_{i=1}^{t_2} \bP^{(i)}\right) \sqrt{n} K.
   \end{align*}
\end{thm}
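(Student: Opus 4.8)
The plan is to track the evolution of the load vector through the network, decomposing the effect of each balancer into an "ideal averaging" part plus a "rounding" perturbation, and then to analyze the rounding perturbations as a sum of bounded, nearly-independent random variables whose randomness comes precisely from the $\alpha$-perturbation of the balancer directions.

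\begin{proof}[Proof sketch]
Write $\bx^{(0)}$ for the initial load vector, and let $\bx^{(i)}$ be the load vector after round $i$. The key observation is that a single balancing operation on a matched pair $[u:v]$ in round $i$ can be written as
\[
  \bx^{(i)} = \bP^{(i)}\bx^{(i-1)} + \tsum_{[u:v]\in M^{(i)}} \tfrac12\,\eta^{(i)}_{uv}\,(e_u-e_v),
\]
where $\eta^{(i)}_{uv}\in\{-1,0,+1\}$ records whether the combined load of $u$ and $v$ was odd (in which case the excess token goes to the head of the balancer, giving $\pm\tfrac12$ relative to the ideal split) or even (giving $0$). Unrolling this recursion from time $t_1$ to time $t_2$ gives
\[
  \bx^{(t_2)} = \Bigl(\tprod_{i=t_1+1}^{t_2}\bP^{(i)}\Bigr)\bx^{(t_1)}
    \;+\; \tfrac12\tsum_{i=t_1+1}^{t_2}\tsum_{[u:v]\in M^{(i)}}
      \eta^{(i)}_{uv}\Bigl(\tprod_{j=i+1}^{t_2}\bP^{(j)}\Bigr)(e_u-e_v).
\]
(The displayed formula in the theorem splits the rounding terms slightly differently, separating the first $t_1$ rounds from the window $[t_1,t_2]$; I would carry out the same unrolling but starting from time $0$, and then bound the contributions of rounds $1,\dots,t_1$ and of rounds $t_1+1,\dots,t_2$ separately.)

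For the contribution of rounds $1,\dots,t_1$: here I want to exploit the $\alpha$-perturbation. Conditioned on the combined-load parities being fixed, each $\eta^{(i)}_{uv}$ for an odd pair is $\pm1$ with the sign determined by the (perturbed) balancer direction; an $\alpha$-perturbed balancer points "the adversarial way" with probability $1-\alpha$ and the other way with probability $\alpha$, so $\Ex{\eta^{(i)}_{uv}}$ has absolute value at most $|1-2\alpha| = 1-2\alpha$, and $\eta^{(i)}_{uv} - \Ex{\eta^{(i)}_{uv}}$ is a bounded mean-zero random variable. The inner product of the $w$-th coordinate of $\bx^{(t_2)}$ with the rounding sum is then a weighted sum $\tsum_{i,[u:v]} \eta^{(i)}_{uv}\,c^{(i)}_{uv,w}$ with coefficients $c^{(i)}_{uv,w} = (e_u-e_v)^T\bigl(\tprod_{j=i+1}^{t_2}\bP^{(j)}\bigr)e_w$. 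The deterministic part, bounded by $\tsum |c^{(i)}_{uv,w}|$, contributes the $3(\tfrac12-\alpha)t_1$ term — here one needs that $\tsum_{[u:v]\in M^{(i)}}|c^{(i)}_{uv,w}|$ is $O(1)$ for each $i$ (this should follow since the columns of a product of the doubly-substochastic $\bP^{(j)}$ have bounded $\ell_1$ norm and the vectors $(e_u-e_v)$ for a matching have disjoint supports), summed over $\le t_1$ rounds, with the constant $3$ absorbing the factor $1-2\alpha\le1$ together with the $\tfrac12$ and slack. The fluctuation part is handled by a concentration inequality (Hoeffding/Azuma over the independent balancer flips, or a martingale exposing one balancer at a time): its typical size is $O\bigl(\sqrt{\log n}\,\|c^{(i)}_{\cdot,w}\|_2\bigr)$, which is exactly $\Lambda_1$ after summing the squared coefficients over $i$ and $[u:v]$, and a union bound over the $n$ choices of $w$ gives the $\sqrt{\log n}$ and the \whp guarantee.

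For the contribution of rounds $t_1+1,\dots,t_2$: here I would not use randomness at all, but simply the trivial bound $|\eta^{(i)}_{uv}|\le1$ together with the fact that the balancing operation never increases discrepancy. More precisely, each of these $t_2-t_1$ rounds can add at most $1$ to the discrepancy (one rounding token per balancer, and each vertex is in at most one balancer per round), giving the $(t_2-t_1)$ term. Finally the "smooth" part $\bigl(\tprod_{i=1}^{t_2}\bP^{(i)}\bigr)\bx^{(0)}$ is controlled by standard spectral mixing: subtracting the mean (preserved by each doubly-stochastic $\bP^{(i)}$), the deviation shrinks by the factor $\lambda\bigl(\tprod_{i=1}^{t_2}\bP^{(i)}\bigr)$, and $\|\bx^{(0)}-\bar x\one\|_2\le \sqrt n K$ gives $\Lambda_2$ as a bound on its discrepancy (after converting $\ell_2$ back to $\ell_\infty$, losing at most another $\sqrt n$ — one has to be a little careful here and it is cleaner to bound the $\ell_2$-distance of $\bigl(\tprod\bP^{(i)}\bigr)(\bx^{(0)}-\bar x\one)$ to zero and note this dominates the discrepancy). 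Summing the four contributions yields the claimed bound.

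\textbf{Main obstacle.} The delicate point is the concentration step for $\Lambda_1$: the parities $\eta^{(i)}_{uv}$ are \emph{not} independent — the parity of the combined load at a pair in round $i$ depends on the entire history, including earlier rounding decisions. The fix is to condition on all parities being revealed, or equivalently to expose the balancer flips one at a time in a martingale and observe that flipping a single balancer in round $i$ changes the final $w$-coordinate by at most $2|c^{(i)}_{uv,w}|$ regardless of the values of all other balancers; the bounded-difference inequality then applies with the sum-of-squares $\tsum (c^{(i)}_{uv,w})^2$ as the variance proxy, which is exactly what sits under the square root in $\Lambda_1$. Getting the constant in the $3(\tfrac12-\alpha)t_1$ term right (rather than, say, $(\tfrac12-\alpha)\cdot O(t_1)$ with a large hidden constant) requires the clean $\ell_1$ bound $\tsum_{[u:v]\in M^{(i)}}|c^{(i)}_{uv,w}|\le 2$ per round, which I would prove by induction on $t_2-i$ using that right-multiplication by each $\bP^{(j)}$ does not increase the relevant $\ell_1$ quantity.
\end{proof}
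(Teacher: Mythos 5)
Your overall architecture coincides with the paper's: the same decomposition of $\by^{(t_2)}-\mu\one$ into a rounding part $\by^{(t_2)}-\xi^{(t_2)}$ and an idealized part $\xi^{(t_2)}-\mu\one$, the same unrolling $\by^{(t_2)}-\xi^{(t_2)}=\sum_i \rho^{(i)}\bP^{[i+1,t_2]}$, the same deterministic bound $(t_2-t_1)$ for the late rounds via the $\ell_1$-norm of columns of the stochastic matrices, and the same spectral argument for $\Lambda_2$. The gap is in the concentration step for rounds $1,\dots,t_1$, precisely at the obstacle you yourself flag. The bounded-differences martingale you propose does not work: flipping the direction of a single balancer $(u,v)$ in round $i$ whose load sum is odd moves a token between $u$ and $v$, which flips the parity of the combined load at the balancers touching $u$ or $v$ in round $i+1$, and so on down the network. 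Thus the sets $M^{(j)}_{\Odd}$ for $j>i$ change, terms of size $|c^{(j)}_{u'v',w}|$ appear and disappear, and the worst-case change in the final $w$-coordinate is not bounded by $2|c^{(i)}_{uv,w}|$ --- it is only bounded by $1$ (one token migrates through the rest of the network), which via Azuma gives a deviation of order $\sqrt{n\,t_1\log n}$ rather than $\Lambda_1$. Your alternative, conditioning on all parities, also fails as stated: the parities are functions of the flips of odd balancers in earlier rounds, so conditioning on them destroys the product structure of the flips and Hoeffding no longer applies to the conditioned sum.

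The paper's workaround is a dedicated comparison lemma (\lemref{independent}): it bounds the tail of $W_{\Odd}$ (the sum over odd balancers, which is not a sum of independent variables) by $16$ times the tail of $W$ (the sum over \emph{all} balancers, which genuinely is a sum of independent $\Psi$'s), at the price of shifting the deviation by $2\max\{|\Ex{W^+}|,|\Ex{W^-}|\}$. This shift, added to $|\Ex{W}|$, is where the constant $3$ in $3(\tfrac12-\alpha)t_1$ actually comes from --- not from absorbing the factor $\tfrac12$ and slack as you suggest. The proof of that lemma is itself the technical heart of the argument: it exploits the fact that an even balancer's direction affects nothing downstream, so the event $\MttOdd=\myM$ is determined by the flips of the odd balancers alone and, conditioned on it, the even balancers' flips remain independent; anti-concentration estimates for asymmetric weighted sums of $\pm1$ variables (\lemrefs{simplelemma}{posrattail}) then show that $W=W_{\Odd}+W_{\Even}$ is large with probability at least $1/16$ whenever $W_{\Odd}$ is. Without this decoupling device (or an equivalent one) your derivation of the $\Lambda_1+3(\tfrac12-\alpha)t_1$ term is incomplete.
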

Before we proceed let us motivate the result stated in \thmref{upper}. There are
two factors that affect the discrepancy: the fact that tokens are indivisible
(and therefore the balancing operation may not be ``perfect''), and how many balancing
rounds are there. On the one hand, the more rounds there are the more balancing
operations are carried, and the smoother the output is. On the other hand, the
longer the process runs, its susceptibility to rounding errors and arbitrary
placement of excess tokens increases. This is however only a seemingly tension,
as indeed the more rounds there are, the smoother the output is. Nevertheless,
in the analysis (at least as we carry it), this tension plays part.
Specifically, optimizing over these two contesting tendencies is reflected in
the choice of $t_1$ and $t_2$. $\Lambda_2$ is the contribution resulting from
the number of balancing rounds being bounded, and $\Lambda_1$, along with the
first two terms, account for the indivisibly of the tokens. In the cases that
will interest us, $t_1,t_2$ will be chosen so that $\Lambda_1,\Lambda_2$ will be
low-order terms compared to the first two terms.

\medskip

\noindent Our \thmref{upper} also implies the following results:
\begin{itemize}
  \item For the aforementioned periodic setting Theorem \ref{thm:upper} implies the following:
  after $\Oh \left(\log (Kn)/\nu \right)$ rounds
  ($\nu=(1- \lambda(\bQ))^{-1})$, $\bQ$ is the so-called round matrix which corresponds to one period,
  $K$ the initial discrepancy) the discrepancy is $\whp$ at most
    \[
        \Oh \left(\frac{d \, \log (Kn)}{\nu}\cdot\left(\frac{1}{2}-\alpha \right) + \frac{d \log\log n}{\nu}  \right).
    \]
  Setting $\alpha=0$ (and assuming $K$ is polynomial in $n$) we get the result of \citep{RSW98}\footnote{We point out that in the original statement in \cite[Corollary 5]{RSW98}, only the number of periods is counted. Hence, in their statement the number of rounds is by a factor of $d$ smaller.}, and for
  $\alpha=1/2$ we get the result of \citep{FS09}.
  (The restriction on $K$ being polynomial can be lifted but at the price of more cumbersome expressions in Theorem \ref{thm:upper}.
  Arguably, the interesting cases are anyway when the total number of tokens, and in particular~$K$, is polynomial). Complete details in \secref{DerivationOfPeriodicCase}.
  \item For the $\ccc_n$, after $\log n$ rounds the discrepancy is \whp at most
     \[ 3 \left(\tfrac{1}{2} - \alpha \right)\log n+ \log \log n + \Oh(1).\]
     Full details in \secref{DerivationOfCCC}.
\end{itemize}
Let us now turn to the lower bound. Here we consider the all-up-orientation of the balancers of a $\ccc_n$ meaning that before the $\alpha$-perturbation, all balancers are directed to the wire with a smaller number.

\begin{thm}
    \label{thm:lower}
    Consider a $\ccc_n$ with the all-up orientation of the balancers and assume
    that the number of tokens at each wire is uniformly
    distributed over $\{0,1,\ldots,n-1\}$ (independently at each wire).
    The discrepancy of the $\alpha$\nobreakdash-perturbed
    network is $\whp$ at least
    \[
        \max \Big\{\big(\tfrac{1}{2} - \alpha\big)\log n - 2 \log \log n,(1-o(1))\,(\log \log n)/2\Big\}.
    \]
\end{thm}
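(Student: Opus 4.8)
The plan is to handle the two quantities in the maximum separately; it suffices to establish each one \whp, since then \whp both hold. Write $d=\log_2 n$, index the wires by strings in $\{0,1\}^d$, and let round $\ell$ balance the wires that differ in bit $\ell$. Unrolling the recursion $\bx^{(\ell)}=\bP^{(\ell)}\bx^{(\ell-1)}+\tfrac12 r^{(\ell)}$, where $r^{(\ell)}$ collects the $\pm1$ rounding contributions of round $\ell$ (nonzero exactly at the odd-sum balancers, sign given by the \alphasmoothed\ orientation), and using that a full butterfly satisfies $\bP^{(1)}\cdots\bP^{(d)}=\n\,\mathbf{1}\mathbf{1}^{T}$, one obtains for the all-up wire $0^d$
\[
  \bx^{(d)}_{0^d}-\avgx \;=\; \tfrac12\tsum_{i=1}^{d} 2^{-(d-i)}\tsum_{*} \sigma^{(i)}_{*}\, b^{(i)}_{*},
\]
where the inner sum ranges over the $2^{d-i}$ round-$i$ balancers that feed $0^d$, the indicator $b^{(i)}_{*}\in\{0,1\}$ records that the corresponding sum is odd, and $\sigma^{(i)}_{*}=+1$ with probability $1-\alpha$ (the balancer keeps its all-up direction, which is \emph{toward} $0^d$'s side). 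The analogous identity for $1^d$ has $\sigma$'s equal to $+1$ only with probability $\alpha$. The network's discrepancy is at least $\bx^{(d)}_{0^d}-\bx^{(d)}_{1^d}$.

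For the first bound, the structural heart is a $2$-adic lemma: if the inputs of a depth-$k$ butterfly are independent and uniform modulo $2^c$ with $c>k$ (orientations arbitrary but independent of the inputs), then the output at every wire is uniform modulo $2^{c-k}$ — because one balancer step sends a pair that is uniform and independent modulo $2^{c'}$ to a value uniform modulo $2^{c'-1}$. Since the initial loads are uniform on $\{0,\dots,2^d-1\}$ (uniform modulo $2^d$) and at each round a balancer pairs two vertices whose current loads depend on disjoint blocks of the initial loads, the lemma applies to every prefix of the circuit, so for each round $i$ the parities $b^{(i)}_{*}$ are mutually independent fair coins; hence $N_i:=\tsum_{*} b^{(i)}_{*}$ is exactly $\mathrm{Bin}(2^{d-i},\tfrac12)$, and since the perturbations are independent of the loads a short computation gives $\Ex{\bx^{(d)}_{0^d}-\bx^{(d)}_{1^d}}=(\tfrac12-\alpha)\,d$. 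For concentration, the round-$i$ summand is $2^{-(d-i)}$ times a sum of $2^{d-i}$ independent bounded terms, so it has standard deviation $\O(2^{-(d-i)/2})$; by Cauchy--Schwarz the total variance is $\O\big((\tsum_i 2^{-(d-i)/2})^2\big)=\O(1)$ regardless of dependence across rounds, and Chebyshev's inequality yields $\bx^{(d)}_{0^d}-\bx^{(d)}_{1^d}\ge(\tfrac12-\alpha)\log n-2\log\log n$ \whp.

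For the second bound, fix $k=\log_2\log_2 n-C$ ($C$ a large constant) and split the wires into $2^{d-k}$ disjoint blocks of $2^k$ wires sharing their first $d-k$ bits; within a block the last $k$ rounds act as an independent depth-$k$ butterfly carrying the induced all-up orientation, with accumulator $w^\circ_B$ (all-zero suffix) and anti-accumulator $w^\bullet_B$ (all-one suffix). The $\O(2^k)$ balancers lying in the feeding cones of $w^\circ_B$ or of $w^\bullet_B$ all point ``up'' under the all-up orientation, which is toward $w^\circ_B$'s side and away from $w^\bullet_B$'s side; conditioned on no perturbation flipping any of them — an event of probability $(1-\alpha)^{\O(2^k)}\ge 2^{-\O(2^k)}$ that is independent across blocks — one checks, using that within each round the odd-sum indicators are independent fair coins, that $\bx^{(d)}_{w^\circ_B}-\bx^{(d)}_{w^\bullet_B}$ has mean $k/2$ and variance $\O(1)$ (the block average cancels between the two wires), hence is $\ge k/2-\O(1)$ with probability at least $\tfrac12$. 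Choosing $C$ so that $2^{d-k}\cdot 2^{-\O(2^k)}\to\infty$ and running a second-moment argument over the (essentially independent) blocks, \whp some block $B$ has $\bx^{(d)}_{w^\circ_B}-\bx^{(d)}_{w^\bullet_B}\ge k/2-\O(1)=(1-o(1))(\log\log n)/2$, which lower-bounds the discrepancy. This part is monotone in $\alpha$ (smaller $\alpha$ only makes ``unflipped'' more likely), and at $\alpha=\tfrac12$ (uniformly random orientation) it also follows from the lower bound of~\citep{MS08}.

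The main obstacle is the probabilistic bookkeeping around these two clean pictures. Two points need care: (i) the $2$-adic lemma and its corollary that the parities produced by a common prefix of the butterfly are \emph{jointly} independent fair coins (so that a whole round of odd-sum indicators truly behaves like independent coins, and in the second bound the block-wise ``good'' events are only weakly correlated through the shared initial loads), which requires a careful induction on the butterfly layers; and (ii) the topmost $\O(\log\log n)$ rounds involve only a handful of balancers, so the relevant binomials do not concentrate there, and one must verify that those rounds contribute only an $\O(1)$ additive error — they do, because their $2^{-(d-i)}$ weights still sum to $\O(1)$ — so that the $-2\log\log n$ slack in the statement comfortably absorbs all lower-order losses.
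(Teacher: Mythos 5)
Your proposal follows essentially the same route as the paper: the same backward unfolding into per-layer weighted sums over the feeding cone, the same $2$-adic uniformity lemma (the paper's Lemmas \ref{lem:BalancersOddWithProbHalf} and \ref{lem:odd}) making the odd-sum indicators independent fair coins, the same expectation computation, and the same decomposition into $n/2^k$ sub-butterflies of depth $k\approx\log\log n$ with a ``no flips in the two cones'' event of probability $\geq n^{-1/2}$ for the second bound. Your use of Chebyshev on the $\Oh(1)$ total variance in place of the paper's per-layer Hoeffding (plus a deterministic bound on the top $\log\log n$ layers) is a legitimate, slightly more economical variant, since the $2\log\log n$ slack easily absorbs an $\Oh(1)$ standard deviation.

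The one genuine gap is in the $(\log\log n)/2$ part. You establish the within-block event only with probability $\geq 1/2$ and then invoke a ``second-moment argument over the (essentially independent) blocks.'' But the block discrepancies are \emph{not} essentially independent: the loads entering two different blocks at layer $d-k$ are determined by overlapping (in fact, for matching suffixes, identical) sets of input wires, and the independence supplied by the key lemma is only for balancers feeding a \emph{single} output wire, not across output wires in different blocks. So the pairwise near-independence your second-moment step needs is unproven and cannot be waved through. The repair is the paper's: the only thing that must be decoupled across blocks is the ``good'' (no-flip) event, which depends solely on the perturbation coins and is exactly independent across blocks; whp one block is good, and then you need the \emph{conditional} within-block success probability to be $1-o(1)$, not $1/2$. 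You get this for free from your own Chebyshev bound by taking the deviation to be $\log\log\log n$ rather than $\Oh(1)$ (costing only the $(1-o(1))$ factor already present in the statement), after which a single good block suffices and no cross-block independence of discrepancies is ever needed.
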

Two more points
to note regarding the lower bound:
\begin{itemize}
  \item For $\alpha=0$ our lower bound matches
        the experimental findings of \cite{HT06}. The authors examined \ccc's of size
        up to $2^{24}$ where all balancers are set in the same direction and the number of tokens
        at each input is a random number between $1$ and $100,000$.
        Their observation was that the average discrepancy is close to $(\log n)/2$ (which matches our lower bound
        with $\alpha=0$).
\item  The input distribution that we use for the lower bound is arguably more natural
than the tailored and somewhat artificial ones used in previous
lower bound proofs \citep{HT06b,MS08}.
\end{itemize}

Finally, we state a somewhat more technical result that we obtain, which lies in
the heart of the proof of the lower bound and sheds light on the mechanics of
the $\ccc$ in the average case input. In what follows, for a balancer $\bal$, we
let $\Odd(\bal)$ be an indicator function which is 1 if $\bal$ had an excess
token. 
By $\cB_i$ we denote the set of balancers that affect output wire $i$
(that is, there is a path through consecutive layers from the balancers to
the output wire $i$). $\cB_i(\ell)$ is the restriction of $\cB_i$ to balancers in layer $\ell$.

\begin{lem}
    \label{lem:BalancersOddWithProbHalf} Consider a $\ccc_n$ network with any fixed
    orientation of the balancers. Assume that the number of tokens at each wire is uniformly
    distributed over $\{0,1,\ldots,n-1\}$ (independently at each wire). Then every balancer $\bal$ in layer $\ell$, $1 \leq \ell \leq \log n$, satisfies the following properties:
    \begin{itemize}
      \item $\Pr{\Odd(\bal)=1}=1/2$,
      \item moreover, for every wire $i$, $\{ \Odd(\bal) \mid \bal \in {\cal{B}}_i \}$ is a set of independent random variables.
    \end{itemize}
\end{lem}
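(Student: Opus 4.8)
The plan is to reduce both assertions to a single combinatorial fact about how the binary digits of the initial loads propagate through the butterfly wiring of the $\ccc_n$, and then to read the conclusion off an explicit measure-preserving change of coordinates on the space of inputs. Write $n=2^k$, index the wires by $\{0,1\}^k$ so that layer $\ell$ matches $x$ with $x\oplus e_\ell$, and for a balancer $\bal$ in layer $\ell$ on the pair $\{u,u\oplus e_\ell\}$ let its \emph{input cone} $L_\bal$ (the input wires from which a forward path reaches $\bal$) be the $2^\ell$ wires agreeing with $u$ on coordinates $\ell+1,\dots,k$; the layers below $\ell$ act on $L_\bal$ as two independent $(\ell-1)$-layer sub-butterflies, one on each half (the halves being distinguished by coordinate $\ell$), and $\bal$ then combines one wire from each. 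Since a balancing step $a,b\mapsto\lfloor(a+b)/2\rfloor,\lceil(a+b)/2\rceil$ never increases the maximum load, all loads stay in $\{0,\dots,n-1\}$ and are always $k$-bit integers; writing $a^{(0)}_x=\sum_{s\ge 0}2^s\beta_s(x)$, the hypothesis makes $\{\beta_s(x): x\in\{0,1\}^k,\ 0\le s\le k-1\}$ a family of i.i.d.\ fair bits. Finally, $\Odd(\bal)$ is precisely the parity of the sum of the two loads entering $\bal$.

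The key step is the following \emph{bit-propagation lemma}, proved by induction on the depth $m$ of a butterfly: for an $m$-layer butterfly on $2^m$ wires with any fixed orientation, and for every $r\ge 0$, bit $r$ of the final load at any wire $z$ equals $\bigoplus_{w}\beta_{m+r}(x_w)\ \oplus\ g_{z,r}$, where the XOR runs over all $2^m$ inputs $x_w$ feeding the butterfly and $g_{z,r}$ depends only on the orientation and on bits $\le m+r-1$ of those inputs. The induction step rests on the one-layer identity $2c^{(j)}_z=c^{(j-1)}_z+c^{(j-1)}_{z\oplus e_j}\pm o$ with $o\in\{0,1\}$ the relevant excess bit: reading bit $r$ of $c^{(j)}_z$ means reading bit $r+1$ of the right-hand side, and a routine schoolbook-addition analysis shows that the two ``fresh'' contributions $\beta_{m+r}$ coming from the two incoming sub-butterflies survive linearly, while all carries, the $\pm o$ correction, and all lower bits fold into a function of bits $\le m+r-1$. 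Specializing to $m=\ell-1$, $r=0$ and XOR-ing the two halves of $L_\bal$ gives, for any balancer $\bal$ in layer $\ell$,
\[
   \Odd(\bal)=\bigoplus_{w\in L_\bal}\beta_{\ell-1}(x_w)\ \oplus\ \Psi(\bal),
\]
where $\Psi(\bal)$ depends only on the orientation and on bits $\le\ell-2$ of the loads in $L_\bal$; in particular $\Odd(\bal)$ depends only on the $\ell$ lowest bits of the loads in $L_\bal$, and each individual bit $\beta_{\ell-1}(x_w)$, $w\in L_\bal$, occurs linearly in it and does not occur in $\Psi(\bal)$.

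Both conclusions now follow. Since every balancer lies in $\cB_u$ for one of the two wires $u$ it joins, the first is immediate: for a fixed $w\in L_\bal$, flipping the single input bit $\beta_{\ell-1}(a^{(0)}_w)$ is a measure-preserving involution of the uniform input space that toggles $\Odd(\bal)$, whence $\Pr{\Odd(\bal)=1}=1/2$. For the independence, fix an output wire $i$ and, for each $\bal\in\cB_i$ in layer $\ell$, choose a \emph{pivot wire} $w(\bal)\in L_\bal$ (e.g.\ the one agreeing with $i$ on coordinates $1,\dots,\ell$) and set $F(\bal):=\beta_{\ell-1}(a^{(0)}_{w(\bal)})$. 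Distinct balancers of $\cB_i$ in the same layer have disjoint input cones, and balancers of different layers have pivots at different bit-positions, so the $|\cB_i|$ bits $\{F(\bal):\bal\in\cB_i\}$ are pairwise distinct coordinates of the i.i.d.\ family $\{\beta_s(x)\}$. Order $\cB_i$ by increasing layer; by the structural lemma $\Odd(\bal)=F(\bal)\oplus(\text{a function of pivots of earlier balancers and of non-pivot bits})$, because a pivot of a later balancer sits either at a strictly higher bit-position than $\bal$'s pivot, or in an input cone disjoint from $L_\bal$, and so cannot appear in $\Psi(\bal)$. Hence the transformation replacing each $F(\bal)$ by $\Odd(\bal)$ and fixing every other $\mathbb{F}_2$-coordinate is unitriangular, hence measure preserving, so $\{\Odd(\bal):\bal\in\cB_i\}$ is a family of i.i.d.\ fair bits.

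I expect the bit-propagation lemma to be the only real obstacle: its content is the carry bookkeeping in the induction step, i.e.\ verifying that neither the carries of the addition nor the $\pm o$ excess correction can corrupt the two high bits contributed by the incoming sub-butterflies. Everything downstream -- the disjointness of input cones inside a fixed $\cB_i$, the injective assignment of pivot bit-positions, and the change-of-variables argument -- is routine.
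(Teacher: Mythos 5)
Your proposal is correct, but it reaches the lemma by a genuinely different route than the paper. The paper proves an auxiliary statement (its Lemma~\ref{lem:odd}) by induction on the layer: conditioned on any realization of the $\Odd$-values of all preceding balancers affecting $\bal$, each of the two inputs to a layer-$\ell$ balancer is uniformly distributed modulo $n/2^{\ell-1}$; the fairness of $\Odd(\bal)$ then follows by reducing mod~$2$, and independence is argued from the fact that the conditional statement holds for every such conditioning together with the disjointness of the relevant input cones. You instead prove an exact $\mathbb{F}_2$-identity, $\Odd(\bal)=\bigoplus_{w\in L_\bal}\beta_{\ell-1}(x_w)\oplus\Psi(\bal)$ with $\Psi(\bal)$ a function of the orientation and of input bits at positions at most $\ell-2$, via a carry-tracking induction on the depth of the sub-butterfly, and then read off both conclusions from measure-preserving changes of coordinates (a single-bit involution for fairness; a layer-ordered unitriangular substitution for joint independence). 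I checked the two points on which your argument hinges and both are sound: the one-layer step $2c=a+b\pm o$ does confine all carries and the excess correction to bits $\le r$ of $a,b$ when one extracts bit $r+1$, so the induction closes; and within a fixed $\cB_i$ two distinct same-layer balancers necessarily differ in a coordinate above $\ell$, so their input cones are disjoint (note this disjointness can fail for two arbitrary same-layer balancers not constrained to lie in a common $\cB_i$, but you only use it inside $\cB_i$, which is exactly where it holds). Comparing the two: your bit-propagation formula gives a sharper structural picture and makes the independence claim fully transparent, where the paper's treatment of independence is rather terse; the paper's residue-based induction avoids all carry bookkeeping and yields the stronger conclusion that the entire residue mod $n/2^{\ell-1}$, not just its low-order bit, is conditionally uniform. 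One cosmetic caution: your symbol $\Psi(\bal)$ collides with the paper's perturbation indicator, so rename it if you integrate this argument.
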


The proof of the lemma is given in \secref{ProofOfBalancersProp}. Let
us remark at this point that the lemma holding under such weak conditions is rather surprising. First, it is valid regardless
of the given orientation. Secondly, the $\Odd$'s of the balancers that affect the
\emph{same} output wire are independent. While this is obvious for balancers that are in the same layer, it seems somewhat surprising for balancers in subsequent layers that are connected.

\subsection{Paper's Organization}
The remainder of the paper is organized as follows. We set out with the proof of \thmref{lower}~in Section \ref{sec:lower}, preceding the proof of Theorem \ref{thm:upper} in Section \ref{sec:upper}.
The reason is that the lower bound is concerned with the $\ccc_n$ (a special case
of our general model). The techniques used in the proof of the lower bound serve
as a good introduction to the more complicated proof of the upper bound. In Sections \ref{sec:DerivationOfPeriodicCase} and \ref{sec:DerivationOfCCC} we show how to derive the special cases of the periodic balancing network and the $\ccc_n$ from Theorem \ref{thm:upper}. Finally we present experimental results that we obtain for the $\ccc_n$ in Section \ref{sec:conclusions}.

\section{Proof of the Lower Bound}
\label{sec:lower}

As we mentioned before, for the special case of the $\ccc_n$ we adopt a ``topographical''
view of the network: calling the vertices \emph{wires}, the time steps \emph{layers}, the left-most side of the
network the ``input'' and the right-most the ``output''.

The proof outline is
the following. Given an input vector $\bx$ (uniformly distributed over the range
$\{0,\ldots,n-1\}$), we shall calculate the expected divergence from the average
load $\mu=\| \bx \|_1 /n$. The expectation is taken over both the smoothing
operation and the input. After establishing the ``right'' order of divergence (in
expectation) we shall prove a concentration result. One of the main keys to
estimating the expectation is \lemref{BalancersOddWithProbHalf} saying that if the input is
uniformly distributed as above, then for every balancer $\bal$,
$\Pr{\Odd(\bal)=1}=1/2$ (the probability is taken only over the input).

Before proceeding with the proof, let us introduce some further notation.
Let $y_1$ be the number of tokens exiting on the top output wire of the network.
For any balancer $\bal$, $\Psi(\bal)$ is an indicator random variable which takes the value $-1/2$ if the
balancer $\bal$ was perturbed, and $1/2$ otherwise (Recall that all balancers are pointing up before the perturbation takes place).

Using the ``standard'' backward (recursive) unfolding (see also \citep{HT06,MS08}
for a concrete derivation for the $\ccc_n$) we obtain that,
\[
    y_1 =
    \mu + \sum_{\ell=1}^{\log n} 2^{-\log n+\ell}
    \sum_{\bal \in \cB_1(\ell) } \Odd(\bal) \cdot \Psi(\bal).
\]
The latter already implies that the discrepancy of the entire network is at least
\[
    y_1 - \mu
    = \sum_{\ell=1}^{\log n} 2^{-\log n+\ell} \sum_{\bal \in \cB_1(\ell)} \Odd(\bal) \cdot \Psi(\bal),
\]
because there is at least one wire whose output has at most $\mu$ tokens (a further improvement of a factor of $2$ will be obtained by considering additionally the bottom output wire and proving that on this wire only a small number of tokens exit).

\noindent Write $ y_1 - \mu = \sum_{\ell=1}^{\log n} S_{\ell}$, defining for each layer $1 \leq \ell \leq \log n$,
\begin{equation}
    S_{\ell} :=
    2^{-\log n +\ell}\sum_{\bal \in \cB_1(\ell)}  \Odd(\bal)
    \cdot \Psi(\bal).
    \label{eq:Sell}
\end{equation}


\subsection{Proof of $(\tfrac{1}{2}-\alpha)\log n - 2\log \log n$ }
\label{sec:ProofOfpro:logn_LowerBound}
We now turn to bounding the expected value of $S_\ell$.
Using the following facts: (a) the $\Odd(\bal)$ and $\Psi(\bal)$ are independent
(b) \lemref{BalancersOddWithProbHalf} which gives $\Ex{\Odd(\bal)}=1/2$, (c) the simple
fact that $\Ex{\Psi(\bal)}=\tfrac{1}{2}-\alpha$ and (d) the fact that in layer $\ell$ there are
$2^{\log n - \ell}$ balancers which affect output wire 1 (this is simply by the
structure of the $\ccc_n$), we get
\begin{align*}
    \Ex{S_{\ell}}
    &=2^{-\log n +\ell}\sum_{\bal \in \cB_1(\ell)} \tfrac{1}{2} \cdot (\tfrac{1}{2} - \alpha)\\
    &=2^{-\log n +\ell}\cdot 2^{\log n - \ell} \cdot \tfrac{1}{2} \cdot \left(\tfrac{1}{2}-\alpha\right)\\
    &=  \tfrac{1}{2} \, \left(\tfrac{1}{2}-\alpha\right).
\end{align*}
This in turn gives that
\[
    \Ex{y_1 - \mu}
    =\Ex{\sum_{\ell=1}^{\log n} S_{\ell}}
    =\tfrac{1}{2} \left(\tfrac{1}{2}-\alpha\right)\log n.
\]
Our next goal is to claim that typically the discrepancy behaves like the
expectation; in other words, a concentration result.
Specifically, we apply Hoeffdings bound to each layer $S_\ell$ separately. It is
applicable as the random variables $2^{-\log n +\ell} \cdot \Odd(\bal) \cdot
\Psi(\bal)$ are independent for balancers within the same layer (such balancers
concern disjoint sets of input wires, and the input to the network was chosen independently for
each wire). For the bound to be useful we need the range of values for the
random variables to be small. Thus, in the probabilistic argument, we shall be
concerned only with the first $\log n - \log \log n$ layers (the last $\log \log
n$ layers we shall bound deterministically). We use the following Hoeffding bound:
\begin{lem}[Hoeffdings Bound]
    \label{lem:hoeffding}
    Let $Z_1,Z_2,\ldots,Z_n$ be a sequence of independent random variables with
    $Z_i \in [a_i,b_i]$ for each $i$. Then for any number $\varepsilon \geq 0$,
    \[
      \Pro{ \left| \tsum_{i=1}^{n} Z_i - \Ex{ \tsum_{i=1}^{n} Z_i } \right| \geq \varepsilon}
        \leq 2 \cdot \exp \left(-\frac{2 \varepsilon^2}{\sum_{i=1}^n (b_i - a_i)^2} \right).
    \]
\end{lem}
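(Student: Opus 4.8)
The plan is to prove this by the standard exponential-moment (Chernoff) method, the only substantial ingredient being the convexity bound on the moment generating function of a bounded, mean-zero random variable. Set $W := \tsum_{i=1}^n Z_i - \Ex{\tsum_{i=1}^n Z_i}$ and $X_i := Z_i - \Ex{Z_i}$, so that $W = \tsum_{i=1}^n X_i$, each $X_i$ has mean zero and lies in an interval of length $b_i - a_i$, and the $X_i$ remain independent.

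First I would apply Markov's inequality to $e^{sW}$ for an arbitrary $s > 0$, obtaining $\Pr{W \geq \ee} \leq e^{-s\ee}\,\Ex{e^{sW}}$, and then use independence to factor $\Ex{e^{sW}} = \tprod_{i=1}^n \Ex{e^{sX_i}}$. The heart of the argument is Hoeffding's lemma: if $X$ has mean zero and $X \in [a,b]$, then $\Ex{e^{sX}} \leq \exp\!\big(s^2(b-a)^2/8\big)$. Its proof uses convexity of $t \mapsto e^{st}$, which for $x \in [a,b]$ gives
\[
  e^{sx} \leq \frac{b-x}{b-a}\,e^{sa} + \frac{x-a}{b-a}\,e^{sb}.
\]
Taking expectations, using $\Ex{X}=0$, and writing $p := -a/(b-a) \in [0,1]$ and $u := s(b-a)$, one gets $\Ex{e^{sX}} \leq e^{\varphi(u)}$ with $\varphi(u) := -pu + \ln(1 - p + p e^u)$. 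A short computation yields $\varphi(0) = \varphi'(0) = 0$ and $\varphi''(u) = \frac{p(1-p)e^u}{(1-p+pe^u)^2} \leq \tfrac14$ (the last step since $q(1-q) \leq \tfrac14$ for $q := p e^u/(1-p+p e^u)$), so Taylor's theorem with remainder gives $\varphi(u) \leq u^2/8 = s^2(b-a)^2/8$.

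Combining these,
\[
  \Pr{W \geq \ee} \leq e^{-s\ee}\,\tprod_{i=1}^n e^{s^2(b_i-a_i)^2/8} = \exp\!\left(-s\ee + \tfrac{s^2}{8}\tsum_{i=1}^n (b_i-a_i)^2\right),
\]
and I would then minimize the right-hand side over $s > 0$: the minimizer is $s = 4\ee/\tsum_{i=1}^n (b_i-a_i)^2$, which gives $\Pr{W \geq \ee} \leq \exp\!\big(-2\ee^2/\tsum_{i=1}^n (b_i-a_i)^2\big)$. Finally, running the same argument on $-W$ (equivalently on the variables $-Z_i \in [-b_i,-a_i]$, whose interval lengths are unchanged) and taking a union bound over the events $\{W \geq \ee\}$ and $\{W \leq -\ee\}$ supplies the leading factor $2$ and completes the proof.

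The hard part is really only Hoeffding's lemma, i.e.\ the estimate $\varphi(u) \leq u^2/8$; everything else (Markov's inequality, independence, the one-variable minimization, the two-sided union bound) is routine. As a sanity check on the constant: when the interval happens to be symmetric, $a=-c$, $b=c$, the convexity bound collapses to $\Ex{e^{sX}} \leq \tfrac12 e^{-sc} + \tfrac12 e^{sc} = \cosh(sc) \leq e^{s^2 c^2/2} = e^{s^2(b-a)^2/8}$, using only $\cosh t \leq e^{t^2/2}$; the general asymmetric case is precisely what the function $\varphi$ is there to handle.
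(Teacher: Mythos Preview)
Your proof is correct and is precisely the classical argument due to Hoeffding. The paper, however, does not give a proof of this lemma at all: it is quoted as a standard tool (``Hoeffdings Bound'') and used without justification, as is typical for such concentration inequalities. So there is no approach in the paper to compare against; your write-up would serve perfectly well as a self-contained appendix proof should one be desired.
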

\noindent
For any random variable $X$, let $\Range{X}$ be the difference between the maximum and minimum value that $X$ can attain. For a balancer $\bal \in \cB_1(\ell)$ we plug in,
\begin{align*}
    Z_\bal &= 2^{-\log n +\ell} \cdot \Odd(\bal) \cdot \Psi(\bal),\\
    \varepsilon &= 2^{(\ell-\log n+\log \log n)/2},\\
    \Range{Z_{\bal}}^2&=\left(2^{\ell-\log n}\right)^2,
\end{align*}
and the sum is over $2^{\log n -
\ell}$ balancers in layer $\ell$. Therefore,
\begin{align*}
    &\Pro{ \left| S_{\ell} - \Ex{S_{\ell}} \right|
    \geq 2^{(\ell-\log n+\log \log n)/2} }\\
    &\qquad\leq 2 \exp \left( -\frac{ 2 \cdot 2^{\ell - \log n + \log \log n } }{ 2^{\ell- \log n}  } \right)\\
    &\qquad\leq n^{-1}.
\end{align*}
In turn, with probability at least $1-\log n /n$ (take the union bound over at
most $\log n$ $S_\ell$ terms):
\[
    \sum_{\ell = 1}^{\log n - \log \log n}S_\ell
    \geq \tfrac{1}{2} \, \left(\tfrac{1}{2}-\alpha\right)(\log n - \log \log n)-
          \sum_{\ell = 1}^{\log n - \log \log n}2^{(\ell-\log n+\log \log n)/2}.
\]
The second term is just a geometric series with quotient $\sqrt{2}$, and therefore can be bounded by $\tfrac{1}{1-1/ \sqrt{2}} < 4$.

For the last $\log \log n$ layers, we have that for every $\ell$,
$|S_\ell|$ cannot exceed $\tfrac{1}{2}$, and therefore their contribution, in absolute value is
at most $\tfrac{1}{2} \log \log n$. Wrapping it up, \whp
\begin{align*}
    y_1 - \mu
    =& \sum_{\ell=1}^{\log n} S_{\ell}
    \geq \tfrac{1}{2} \, \left(\tfrac{1}{2}-\alpha\right)(\log n - \log \log n) - 4 - \tfrac{1}{2} \log \log n .
\end{align*}
The same calculation implies that the number of tokens at the \emph{bottom}-most output wire deviates
from $\mu$ in the same way (just in the opposite direction), so 
\begin{align*}
    y_n - \mu
    \leq& - \tfrac{1}{2} \, \left(\tfrac{1}{2}-\alpha\right)(\log n - \log \log n) + 4 + \tfrac{1}{2} \log \log n .
\end{align*}
Hence, the discrepancy is \whp lower bounded by (using the union bound over the top and bottom wire, and not
claiming independence)
\begin{align*}
 y_1 - y_n
 &\geq \left(\tfrac{1}{2}-\alpha\right) \log n - 8 - (\tfrac{3}{2} - \alpha) \log \log n \\
 &\geq \left(\tfrac{1}{2}-\alpha\right) \log n-2\log \log n.
\end{align*}


\subsection{Proof of $(1-o(1))\log \log n /2$}
\label{sec:ProofOfpro:loglogn_LowerBound}

The proof here goes along similar lines to \secref{ProofOfpro:logn_LowerBound}, only
that now we choose the set of balancers we
apply it to more carefully. By the structure of the $\ccc_n$, the last $x$
layers form the parallel cascade of  $n/2^x$ independent \ccc\ subnetworks each
of which has $2^{x}$ wires (by independent we mean that the sets of balancers are
disjoint).

We call a subnetwork \emph{good} if after an $\alpha$\nobreakdash-perturbation
of the all-up initial orientation, all the balancers affecting the top (or bottom) output wire were not flipped (that is, still point up).

The first observation that we make is that $\whp$ (for a suitable choice of $x$, to
be determined shortly) at least one subnetwork is good. Let us prove this fact.

The number of balancers affecting the top (or bottom) wire in one of the subnetworks is
$\sum_{\ell=1}^{x} 2^{\ell-1} \leq 2^{x}.$ In total, there are no more than $2^{x+1}$ affecting both wires. The probability that none of these balancers was flipped
is (using our assumption $\alpha \leq 1/2$) $(1-\alpha)^{2^{x+1}} \geq 2^{-2^{x+1}}.$
Choosing $x= \log \log n - 2$, this probability is at least $n^{-1/2}$;
there are at least $n/\log n$ such subnetworks, thus the probability that none is good is at most
\[
    \left(1-n^{-1/2}\right)^{n/\log n}=o(1).
\]

\noindent
Fix one good subnetwork and let $\mu'$ be the average load at the input to that subnetwork. Repeating the arguments
from \secref{ProofOfpro:logn_LowerBound} (with $\alpha=0$, $\log n$ re-scaled to $x=\log \log n-2$, and now using the
second item in \lemref{BalancersOddWithProbHalf} which guarantees that the probability of $\Odd(\cdot)=1$ is still $1/2$,
for any orientation of the balancers)
gives that in the top output wire of the subnetwork
there are $\whp$ at least $\mu'+(\log \log n)/4-\Oh(\log \log \log n)$ tokens,
while on the bottom output wire there are $\whp$ at most $\mu'-(\log \log n)/4+\Oh(\log \log \log n)$ tokens.
Using the union bound, the discrepancy is $\whp$ at least their difference, that is, at least $(\log \log n)/2-\Oh(\log \log \log n)$.

\subsection{Proof of \lemref{BalancersOddWithProbHalf}}
\label{sec:ProofOfBalancersProp}

The following observation is the key idea in proving \lemref{BalancersOddWithProbHalf}. Recall that ${\cal{B}}_j$ stands for the set
of balancers that affect wire $j$.
For a balancer $\bal$ in layer $\ell$ let $A(\bal)$ describe an assignment of $\Odd(\bal')$ values for all balancers $\bal'$ in preceding layers that affect $\bal$ (that
is, there is a path from $\bal'$ to $\bal$ through consecutive layers).

\begin{lem}\label{lem:odd} Consider a $\ccc_n$ network with any
    orientation of the balancers. Assume that the number of tokens at each wire is uniformly
    distributed over $\{0,1,\ldots,n-1\}$. Consider a balancer $\bal$ in layer $\ell$ with  $1 \leq \ell \leq \log n$ and let $x_1,x_2$ denote the two input wires that go into $\bal$. Then
    for any assignment $A(\bal)$, $x_i \mod \left(n/2^{\ell-1}\right) \, \mid \, A(\bal)$ is uniformly distributed
    over $\{0,1,\ldots,\left(n/2^{\ell-1}\right)-1\}$, for $i=1,2$.
\end{lem}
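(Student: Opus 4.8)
The plan is to prove this by induction on the layer $\ell$. The base case $\ell=1$ is immediate: there is no preceding layer, the assignment $A(\bal)$ is empty, and by hypothesis each input wire carries a number of tokens that is uniformly distributed over $\{0,1,\ldots,n-1\}$, which is exactly $\{0,1,\ldots,(n/2^{0})-1\}$. So I would focus on the inductive step: assuming the claim at layer $\ell-1$, I want to establish it at layer $\ell$.

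First I would set up the recursion explicitly. Fix a balancer $\bal$ in layer $\ell$ with input wires carrying loads $x_1,x_2$. Each $x_i$ is the \emph{output} of some balancer $\bal_i$ in layer $\ell-1$ (or it is a wire that was not matched in layer $\ell-1$, a case I would handle separately but analogously, since then $x_i$ is unchanged from layer $\ell-1$ and the inductive hypothesis applies directly). For the interesting case, let $a,b$ be the two input loads to $\bal_i$; then the two outputs of $\bal_i$ are $\lceil (a+b)/2 \rceil$ and $\lfloor (a+b)/2 \rfloor$, and which wire gets which depends on the orientation of $\bal_i$, but in all cases the output equals $(a+b)/2$ rounded in a fixed way, i.e.\ it is determined by $a+b$ together with the (deterministic) orientation. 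The key point is that conditioning on $A(\bal)$ fixes the $\Odd$-values of all ancestor balancers, hence fixes all the ``rounding choices'' along the way; so $x_i$, given $A(\bal)$, is a deterministic function of the layer-$(\ell-1)$ ancestor input loads, and in fact $x_i \equiv \lfloor (\text{sum of the two ancestor inputs})/2\rfloor$ or its ceiling, with the choice pinned down by $A(\bal)$.

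Now I would push through the modular arithmetic. The two input wires feeding $\bal_i$ at layer $\ell-1$ are, by the structure of the $\ccc_n$, a pair whose loads I may call $x_1',x_2'$; the inductive hypothesis says each of $x_1' \bmod (n/2^{\ell-2})$ and $x_2' \bmod (n/2^{\ell-2})$ is uniform on $\{0,\ldots,(n/2^{\ell-2})-1\}$, conditioned on the relevant ancestor assignment. I want to show $\lfloor (x_1'+x_2')/2\rfloor \bmod (n/2^{\ell-1})$ (or the ceiling version) is uniform on $\{0,\ldots,(n/2^{\ell-1})-1\}$. The clean way is: it suffices to know $(x_1'+x_2') \bmod (n/2^{\ell-2})$ is uniform on its range, because halving (with a fixed rounding rule, and noting $n/2^{\ell-2}$ is even) is a bijection from $\mathbb{Z}/(n/2^{\ell-2})$ onto $\mathbb{Z}/(n/2^{\ell-1})$ that is exactly two-to-one, or more carefully, the map $s \mapsto \lfloor s/2 \rfloor$ sends a uniform element of $\{0,\ldots,2m-1\}$ to a uniform element of $\{0,\ldots,m-1\}$. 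And $(x_1'+x_2')\bmod(n/2^{\ell-2})$ is uniform because $x_1' \bmod (n/2^{\ell-2})$ is uniform and \emph{independent} of $x_2'$ — here I would invoke that distinct layer-$(\ell-1)$ balancers depend on disjoint sets of original input wires, so their loads, and the conditioning events $A(\cdot)$ for each, are independent; a sum of a uniform element of a finite cyclic group with any independent element is uniform. This independence-plus-uniformity argument, threading the conditioning $A(\bal)$ correctly so that it decomposes into independent pieces for the two ancestor subnetworks, is the step I expect to be the main obstacle — not because any single computation is hard, but because one has to be careful that "conditioning on $A(\bal)$" really does split as a product over the two disjoint ancestor cones and does not secretly couple $x_1'$ and $x_2'$. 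Once that is pinned down, assembling the induction and reading off the final statement for $i=1,2$ is routine.
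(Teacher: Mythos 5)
Your proposal follows essentially the same route as the paper's proof: induction on the layer, independence of the two ancestor cones (which involve disjoint sets of balancers and original input wires by the structure of the $\ccc$), uniformity of the convolution $x_1'+x_2'$ modulo $n/2^{\ell-2}$, and then passing to the half-size modulus via the balancer's rounding. One small imprecision to note: since $A(\bal)$ contains $\Odd(\bal_i)$ and hence, as you yourself observe, fixes the parity of $x_1'+x_2'$, that sum conditioned on $A(\bal)$ is uniform only over the $n/2^{\ell-1}$ residues of the prescribed parity rather than over all of $\{0,\ldots,n/2^{\ell-2}-1\}$; the conclusion survives because halving with the fixed rounding rule is then a bijection from those residues onto $\{0,\ldots,n/2^{\ell-1}-1\}$, which is also in substance how the paper handles it (it computes $\Pr{x_1\equiv k}$ as the probability of the two admissible residues $2k$ and $2k-1$ of the sum and checks the result is independent of $k$).
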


The lemma easily implies \lemref{BalancersOddWithProbHalf}: Consider a balancer
$\bal$ in layer $\ell$, with $1 \leq \ell \leq \log n$ with two inputs
$x_1,x_2$. \lemref{odd} implies that $x_i \mod \left(n/2^{\ell-1}\right)$ is
uniformly distributed over $\{0,1,\ldots,n/2^{\ell-1}-1\}$, and in particular is
uniform over that range mod~2. Hence both $x_1,x_2$ are odd/even with
probability 1/2. Further observe that by the structure of the \ccc\ network,
the input wires $x_1,x_2$ depend on disjoint sets of wires and balancers. By the two latter
facts it follows that the sum $x_1+x_2$ is odd (or even) with probability $1/2$,
or in other words $\Pr{\Odd(\bal)=1}=1/2$. The independence part follows from the fact that this is true for every conditioning $A(\bal)$ on balancers from previous layers that affect $\bal$ (by \lemref{odd}) and the fact that balancers in the same layer that affect the same output wire are independent by construction, as those balancers depend on disjoint sets of balancers and disjoint sets of input wires.

\begin{proof}[Proof of \lemref{odd}]
We prove the lemma by induction on $\ell$, the layer of the balancer.
The base case is immediate: the input to a balancer in layer $1$ is just the original input,
which is by definition distributed uniformly over $\{0,1,\ldots,n/2^{1-1}-1\}$
which is simply $\{0,1,\ldots,n-1\}$. Assume the lemma is true for all balancers
in layer $\ell$ and consider a balancer in layer $\ell+1$.
Let $x_1,x_2$ be its
two input wires. By the structure of the \ccc, the value on each wire is
determined by a disjoint set of balancers and preceding wires, therefore we can
treat, w.l.o.g., only $x_1$. Let $A_1(\bal)$ be the part of $A(\bal)$ that affects $x_1$.
($A_1(\bal)$ would be the set of balancers so that there is a path from them to $x_1$. Since
the initial load on the input wires is chosen independently at every wire, and the sets of balancers
affecting $x_1$ and $x_2$ are disjoint, indeed only $A_1(b)$ affects $x_1$, and the same applies for $A_2(b)$ and $x_2$).
Thus our goal is to calculate
\[
    \Pro{x_1 \equiv k  \mod \big(n/2^\ell\big) ~\big|~ A_1(\bal)}.
\]
Let $\bal'$ be the balancer in layer $\ell$ whose
one outlet is $x_1$, and let $x_1',x_2'$ be its two inputs.
Recall that $x_1 = \lfloor(x_1'+x_2')/2\rfloor$, and a possible +1 addition in
case this sum is odd and the balancer points in the direction of $x_1$.
Furthermore it is easy to verify that for every $a$ (assume $a$ is even, if odd
write $a+1$ instead)
\[
    a/2
    \equiv k \mod \big(n/2^\ell\big) \Leftrightarrow a
    \equiv 2k \mod \big(n/2^{\ell-1}\big).
\]
Therefore for the event $x_1 \equiv k  \mod \left(n/2^\ell\right)$ to occur, either
\begin{itemize}
    \item $x_1'+x_2' \equiv 2k \mod \left(n/2^{\ell-1}\right)$, or
    \item $x_1'+x_2' \equiv 2k-1 \mod \left(n/2^{\ell-1}\right)$
    (assuming w.l.o.g.\ that the balancer points towards $x_1$, otherwise the sum equals $2k+1$).
\end{itemize}
Let us consider the first case.
\begin{align*}
    &\Pro{x_1'+x_2' \equiv 2k  \mod \left(n/2^{\ell-1}\right) \mid A_1(b)} \\
    &=\Pro{\bigvee_{i=0}^{n/2^{\ell-1}-1} x_1' \equiv i \mod \left(n/2^{\ell-1}\right)
      \wedge  x_2' \equiv 2k-i \mod \left(n/2^{\ell-1}\right) \mid A_1(b)}.
\end{align*}
Now observe that the values of $x_1'$ and $x_2'$ are determined
independently, as again, by the structure of the \ccc, they involve
disjoint sets of balancers and input wires. Similarly to $A_1(b)$ we can define
$A_1'(b)$ and $A_2'(b)$ which correspond to the parts of $A$ affecting $x'_1$ and $x'_2$.
By the structure of the $\ccc$, $A_1'(b)$ and $A_2'(b)$ depend on a disjoint set of input wires (and balancers).
As the input is chosen independently for every wire, $A_2'(b)$ does not affect $x'_1$ and similarly
$A_1'(b)$ does not affect $x'_2$. Thus the latter reduces to
\begin{align*}
    &\sum_{i=0}^{n/2^{\ell-1}-1} \Pro{x_1'\equiv i \mod \left(n/2^{\ell-1}\right) \mid A_1'(b)}\\
    &\phantom{\sum_{i=0}^{n/2^{\ell-1}-1} }\cdot \Pro{x_2'\equiv 2k-i \mod \left(n/2^{\ell-1}\right) \mid A_2'(b)}.
\end{align*}
By the induction hypothesis (applied to the $x_i'$ at layer $\ell-1$), for every
$i$, each of $\Pro{x_1'\equiv i \mod \left(n/2^{\ell-1}\right) \mid A_1'(b)}$ and
$\Pro{x_2'\equiv 2k-i \mod \left(n/2^{\ell-1}\right) \mid A_2'(b)}$ is uniformly distributed
over the range $\{0,\ldots, n/2^{\ell-1}-1\}$, and therefore in particular the
entire expression does not depend on $k$, or, put differently is the same for
every choice of $k$. The same argument holds for the case
$x_1'+x_2' = 2k-1 \mod \left(n/2^{\ell-1}\right)$.
This completes the proof.
\end{proof}

\section{Proof of the Upper Bound}
\label{sec:upper}

We shall derive our bound by measuring the difference between the number of tokens at any vertex
and the average load (as we did in the proof of the lower bound for the $\ccc_n$).
Specifically we shall bound $\max_i |y^{(t)}_i-\mu|$, $y^{(t)}_i$ being the number
of tokens at vertex $i$ at time $t$ (we use $\by^{(t)}=(y_i^{(t)})_{i\in V}$ for the vector of loads at time $t$).
There are two contributions to the divergence from $\mu$ (which we analyze separately):
\begin{itemize}
  \item The divergence of the idealized process from $\mu$ due to its finiteness.
  \item The divergence of the actual process from the idealized process due to indivisibility.
\end{itemize}
The idea to compare the actual process to an idealized one was suggested in
\citep{RSW98} and was combined with convergence results of Markov
chains. Though we were inspired by the basic setup from \citep{RSW98} and the
probabilistic analysis from \citep{FS09}, our setting differs in a crucial
point: when dealing with the case $0 < \alpha < 1/2$, we get a delicate mixture
of the deterministic and the random model. For example, the random variables in our analysis are not symmetric anymore which leads to additional technicalities
(cf.~\lemref{independent}).

\medskip

Formally, let $\xi^{(t)}$ be the load vector of the idealized process at time $t$, then by the triangle inequality ($\one$ is  the all-one vector)
\[
    \|\by^{(t)}-\mu \one\|_\infty \leq \|\by^{(t)}-\xi^{(t)}\|_\infty + \|\xi^{(t)}-\mu\one \|_\infty.
\]

\begin{pro}\label{pro:TriangleIneq}
Let $G$ be a balancing network with matchings $M^{(1)},\ldots,M^{(T)}$.  Then,
\begin{itemize}
  \item $\|\xi^{(t_2)}-\mu\emph{\one}\|_\infty \leq \Lambda_2$,
  \item $\whp$ over the $\alpha$\nobreakdash-perturbation operation, \\$\|\by^{(t_2)}-\xi^{(t_2)}\|_\infty \leq (t_2-t_1)
   	  + 3 \left(\tfrac{1}{2} - \alpha \right) t_1 +\Lambda_1.$
\end{itemize}
\end{pro}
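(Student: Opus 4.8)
The plan is to follow both processes round by round and isolate the rounding errors. Let $\xi^{(0)}=\bx$ and $\xi^{(i)}=\bP^{(i)}\xi^{(i-1)}$ be the idealized (fractional) process. When a balancer $[u:v]\in\Mi$ acts on $\by^{(i-1)}$ it outputs the same two values as $\bP^{(i)}$ would, except that when $y^{(i-1)}_u+y^{(i-1)}_v$ is odd an extra half token is shifted towards the (post-perturbation) head of the balancer and away from its tail. So $\by^{(i)}=\bP^{(i)}\by^{(i-1)}+r^{(i)}$ with $r^{(i)}=\sum_{[u:v]\in\Mi}\Odd_i([u:v])\,\sigma^{(i)}_{[u:v]}\,\tfrac12(e_u-e_v)$, where $\Odd_i([u:v])$ indicates that the round-$i$ sum on $\{u,v\}$ is odd and $\sigma^{(i)}_{[u:v]}\in\{\pm1\}$ records the perturbed direction. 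Hence $d^{(i)}:=\by^{(i)}-\xi^{(i)}$ satisfies $d^{(0)}=0$ and $d^{(i)}=\bP^{(i)}d^{(i-1)}+r^{(i)}$, and unrolling gives $\by^{(t_2)}-\xi^{(t_2)}=\sum_{i=1}^{t_2}\big(\prod_{j=i+1}^{t_2}\bP^{(j)}\big)r^{(i)}$. This identity drives both bullets.

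For the first bullet I would argue purely spectrally: each $\bP^{(i)}$ is symmetric, doubly stochastic and fixes $\one$, hence so does $\bQ:=\prod_{i=1}^{t_2}\bP^{(i)}$, so $\xi^{(t_2)}-\mu\one=\bQ(\bx-\mu\one)$ with $\bx-\mu\one\perp\one$. Bounding $\|\cdot\|_\infty\le\|\cdot\|_2$, contracting on $\one^{\perp}$ by the factor $\lambda(\bQ)$, and using $\|\bx-\mu\one\|_2\le\sqrt n\,\|\bx-\mu\one\|_\infty\le\sqrt n K$ yields exactly $\Lambda_2$.

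For the second bullet I would split the sum at $t_1$. The tail $t_1<i\le t_2$ is handled crudely: each $\bP^{(j)}$ is an $\ell_\infty$-contraction and $\|r^{(i)}\|_\infty\le\tfrac12$ (each vertex lies in at most one balancer of $\Mi$), so the tail contributes at most $(t_2-t_1)/2\le t_2-t_1$. For the head, fix a coordinate $w$; its head contribution is $\sum_{i\le t_1}\sum_{[u:v]\in\Mi}\Odd_i([u:v])\,\sigma^{(i)}_{[u:v]}\,a^{(i)}_{[u:v],w}$ with the \emph{deterministic} scalars $a^{(i)}_{[u:v],w}=\tfrac12(e_u-e_v)^{T}\big(\prod_{j=i+1}^{t_2}\bP^{(j)}\big)e_w$. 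I would reveal the perturbation bits in order of rounds: just before the bit of $[u:v]\in\Mi$ is exposed, $\by^{(i-1)}$ and hence $\Odd_i([u:v])$ are already fixed, while $\sigma^{(i)}_{[u:v]}$ is independent of the past with mean $\pm(1-2\alpha)$. Subtracting this conditional mean produces a martingale-difference sequence of range $\le 2|a^{(i)}_{[u:v],w}|$, to which a martingale (Azuma) version of \lemref{hoeffding} applies; with threshold $\Lambda_1$ the deviation is $\le\Lambda_1$ with probability $\ge 1-2n^{-2}$, so a union bound over the $n$ coordinates $w$ makes it hold for all $w$ \whp. The leftover drift $\sum_{i\le t_1}\sum_{[u:v]}\Odd_i([u:v])(\pm(1-2\alpha))a^{(i)}_{[u:v],w}$ is bounded deterministically: since $\prod_{j=i+1}^{t_2}\bP^{(j)}$ is doubly stochastic and the matched vertices of $\Mi$ are distinct, $\sum_{[u:v]\in\Mi}|a^{(i)}_{[u:v],w}|\le\tfrac12$, whence the drift is $O((\tfrac12-\alpha)t_1)$, and tracking constants keeps it below $3(\tfrac12-\alpha)t_1$. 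Adding the three pieces gives the claimed bound.

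The hard part is the coupling inside the head term: $\Odd_i([u:v])$ depends on the \emph{accumulated} roundings of all earlier rounds, hence on earlier perturbation bits, so the rounding indicators and the signs $\sigma^{(i)}$ are globally dependent and one cannot treat the head contribution as a sum of independent terms. The resolution is precisely the round-ordered filtration above together with the fact that within a single matching the balancers occupy disjoint vertex sets — so their bits are mutually independent and independent of the current loads — which is the content of \lemref{independent}; this is what turns the head contribution into a single bounded-difference martingale despite the entanglement of rounding and perturbation. A secondary nuisance, also flagged in the paper, is that for $0<\alpha<1/2$ the centered signs $\sigma^{(i)}_{[u:v]}-\Ex{\sigma^{(i)}_{[u:v]}}$ are not symmetric, so one must use the range (rather than magnitude) form of the concentration inequality and carry the drift term explicitly.
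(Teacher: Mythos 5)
Your proof is correct, and for the second bullet it takes a genuinely different route from the paper. The paper does not use a martingale: it first passes from the dependent sum $W_{\Odd}$ (over balancers that actually received an excess token) to the fully independent sum $W$ over \emph{all} matched pairs, via the decoupling inequality of \lemref{independent} -- whose proof conditions on the realized set $\MttOdd$ and invokes two anti-concentration lemmas (\lemrefs{simplelemma}{posrattail}) to show $W_{\Even}$ does not cancel the deviation with probability at least $1/16$ -- and only then applies \lemref{hoeffding} to $W$. That detour costs a factor $16$ in the failure probability and is the source of the constant $3$ in $3(\tfrac12-\alpha)t_1$ (one $(\tfrac12-\alpha)t_1$ from $|\Ex{W}|$ plus $2\max\{|\Ex{W^+}|,|\Ex{W^-}|\}$). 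Your round-ordered filtration handles the dependence of $\Odd_i$ on earlier perturbation bits directly: since $\Odd_i$ is measurable in the past and the current bit is independent of it, the centered terms form a bounded-difference martingale, Azuma--Hoeffding in range form gives exactly the paper's $\gamma$ and hence $\Lambda_1$, and your deterministic drift bound $(\tfrac12-\alpha)t_1$ is in fact sharper than the paper's $3(\tfrac12-\alpha)t_1$ (your tail bound $(t_2-t_1)/2$ is likewise slightly sharper). What the paper's approach buys is a reusable comparison principle between $W_{\Odd}$ and $W$ that permits quoting a plain i.i.d.\ Hoeffding bound; what yours buys is the elimination of \lemref{independent} and its delicate conditional-independence step, plus better constants. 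One small correction: the mutual independence of the perturbation bits within a matching is not ``the content of \lemref{independent}'' -- it is part of the model definition; \lemref{independent} is the $W_{\Odd}$-versus-$W$ tail comparison, which your argument simply does not need.
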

\thmref{upper} then follows. The proof of the first part of the proposition consists of standard spectral arguments and is given in \secref{spectral} for completeness.  The proof the second part is more involved and is given in
\secref{TriangleIneqTwo}.

\subsection{Proof of \proref{TriangleIneq}: Bounding $\|\xi^{(t_2)}-\mu\one\|_\infty$}
\label{sec:spectral}
Letting $\xi^{(0)}$ be the initial load vector, it is easily seen that
\[
    \xi^{(t_2)} = \xi^{(0)} \, \bP^{(1)} \, \bP^{(2)} \cdots \bP^{(t_2)},
\]
where $\bP^{(i)}$ is the matrix
corresponding to matching $M^{(i)}$ (as defined in \secref{TheModel}).
For simplicity let us abbreviate
\[
    \bP^{[i,j]}:= \bP^{(i)}\,\bP^{(i+1)}\,\bP^{(i+2)}\cdots \bP^{(j)}.
\]
Since $\bP^{[1,t_2]}$ is real valued and symmetric (as each of the $\bP^{(i)}$'s is),
it has $n$ real-valued eigenvalues $\lambda_1 \geq \lambda_2 \geq \ldots \geq \lambda_n$
whose corresponding eigenvectors $v_1,\ldots,v_n$ form an orthogonal basis of $\mathbb{R}^n$.
Next we observe that 
\[
    \xi^{(t)} - \mu\,\one
    =
    \xi^{(0)}\,\bP^{[1,t_2]}-\mu\,\one \, \bP^{[1,t_2]}
    =
    (\xi^{(0)}-\mu\one)\,\bP^{[1,t_2]},
\]
since $\one$ is an eigenvector of $\bP^{[1,t_2]}$
corresponding to $\lambda_1=1$. Furthermore $\xi^{(0)}\cdot \one$ is just the total (initial) number of tokens,
and therefore by definition we get $(\xi^{(0)}-\mu\one)\cdot \one = 0$.
Finally, let us project $\xi^{(0)}-\mu\one$ onto $v_1,\ldots,v_n$, that is, write
$\xi^{(0)}-\mu\one= \sum_{i=2}^n c_i v_i$ ($c_1=0$ as we said). For our goal to bound $\|\xi^{(t_2)}-\mu\one \|_\infty$,
it suffices to bound $\|\xi^{(t_2)}-\mu\one \|$ (recall that $\|.\|$ refers
to the $\ell_2$-norm) as for every vector $z$, $\|z\|_\infty \leq \|z \|.$ By the above,
\begin{align*}
    \big\| \xi^{(t_2)}-\mu\one \big\|
    &= \big\|  (\xi^{(0)}-\mu\one) \bP^{[1,t_2]} \big\|\\
    &= \Bigg\| \sum_{i=2}^n c_i v_i \cdot \bP^{[1,t_2]} \Bigg\|\\
    &= \Bigg\| \sum_{i=2}^n c_i \lambda_i v_i \Bigg\|.
\end{align*}
Recall that $\lambda(\bP^{[1,t_2]})$ denotes the second largest eigenvalue of $\bP^{[1,t_2]}$ in absolute value.
By the definition of the $\ell_2$-norm, and using the fact that the $v_i's$ form an orthogonal basis, the latter equals
\begin{align*}
    \left(\sum_{i=2}^n c_i^2 \lambda_i^2 \|v_i\|^2 \right)^{1/2}
    \leq& \lambda(\bP^{[1,t_2]}) \cdot \left(\sum_{i=2}^n c_i^2 \|v_i\|^2 \right)^{1/2}\\
    =& \lambda(\bP^{[1,t_2]}) \cdot \|\xi^{(0)}-\mu\one \|\\
    \leq&  \lambda(\bP^{[1,t_2]})\, K \sqrt{n}.
\end{align*}

\subsection{Proof of \proref{TriangleIneq}: Bounding $\|\by^{(t_2)}-\xi^{(t_2)}\|_\infty$}
\label{sec:TriangleIneqTwo}

The proof of this part resembles in nature the proof of \thmref{lower}.
Assuming an ordering of $G$'s vertices, for a balancer $\bal$ in round $t$,
$\bal=(u,v)$, $u < v$, we set $\Phi^{(t)}_{u,v}=1$ if the initial  direction (before the perturbation) is $u \to v$ and $-1$ otherwise
(in the lower bound we considered the all-up orientation thus we had no use
of these variables). As in \secref{lower}, for a balancer $\bal=(u,v)$ in round $t$, the random variable $\Psi^{(t)}_{u,v}$ is $-1/2$ if the balancer
is perturbed and $1/2$ otherwise. Using these notations we define a \emph{rounding vector} $\rho^{(t)}$, which
accounts for the rounding errors in step $t$. Formally,
\[
    \rho_u^{(t)} =
    \begin{cases}
        \Odd(y_u^{(t-1)}+y_v^{(t-1)})\cdot\Psi^{(t)}_{u,v}\cdot\Phi^{(t)}_{u,v}
        \hspace{-3.5cm}
        \\
        &\text{if $u$ and $v$ are matched in $M^{(t)}$ and $u<v$},\\
        -\Odd(y_u^{(t-1)}+y_v^{(t-1)})\cdot\Psi^{(t)}_{v,u}\cdot\Phi^{(t)}_{v,u}
        \hspace{-3.5cm}
        \\        
        &\text{if $u$ and $v$ are matched in $M^{(t)}$ and $u>v$}, \\
        0 & \text{if $u$ is unmatched.}
    \end{cases}
\]

\noindent
Now we can write the actual process as follows:

\begin{equation}\label{eq:ActualProcess}
\by^{(t)}=\by^{(t-1)}\bP^{(t)}+\rho^{(t)}.
\end{equation}

\noindent
Let $\MtEven$ be the set of balancers at time $t$ with no excess token, and $\MtOdd$ the ones with. We can rewrite $\rho^{(t)}$ as follows:
\[
    \rho^{(t)}=\tsum_{(u,v)\in \MtOdd} \Psi^{(t)}_{u,v}\cdot \Phi^{(t)}_{u,v}\cdot\big( e_u - e_v \big).
\]

\noindent
Unfolding \eq{ActualProcess}, yields then
\begin{equation*}
    \by^{(t)} = \by^{(0)}\bP^{[1,t]} + \sum_{i=1}^t \rho^{(i)} \bP^{[i+1,t]},
\end{equation*}
where
$\bP^{[2,1]} = \mathbf{I}$.
\noindent
Observe that $\by^{(0)}\bP^{[1,t]}$ is just $\xi^{(t)}$ (as $\xi^{(0)}=\by^{(0)}$), and therefore
\begin{align*}
    \by^{(t)}-\xi^{(t)}
    &= \tsum_{i=1}^t \rho^{(i)} \bP^{[i+1,t]} \\
    &= \tsum_{i=1}^t
           \tsum_{(u,v) \in \MlOdd}
           \Psi^{(i)}_{u,v} \cdot \Phi^{(i)}_{u,v}\cdot(e_u-e_v)\cdot \bP^{[i+1,t]}.
\end{align*}
In turn,
\begin{equation}\label{eq:xminusxi}
    \big(\by^{(t)}-\xi^{(t)}\big)_w= \tsum_{i=1}^t
           \tsum_{(u,v)\in \MlOdd}
           \Psi^{(i)}_{u,v}\cdot \Phi^{(i)}_{u,v}\cdot\left(\bP^{[i+1,t]}_{u,w} - \bP^{[i+1,t]}_{v,w}\right).
\end{equation}

Our next task is to bound \eq{xminusxi} to receive the desired term from \proref{TriangleIneq}.
We do that similar in spirit to the way we went around in
\secref{ProofOfpro:logn_LowerBound}. We break this sum into its first $t_1$ summands
(whose expected sum we calculate and to which we apply a large-deviation-bound). The
remaining $(t-t_1)$ terms are bounded deterministically.

One major difficulty in the general setting is that \lemref{BalancersOddWithProbHalf} (which was crucial
in the proof of \thmref{lower}) does not hold in general as its
proof makes substantial use of the special structure of the $\ccc$. 

\medskip

\noindent Equation \ref{eq:xminusxi} with $t=t_2$ yields
    \begin{eqnarray*}
    (x^{(t_2)}-\xi^{(t_2)})_w
           &=&
           \tsum_{i=1}^{t_2}
           \tsum_{(u,v)\in \MlOdd}
           \Psi_{u,v}^{(i)} \, \Phi_{u,v}^{(i)}
           \,
           \Big(
           \bP^{[i+1,t_2]}_{u,w} -
           \bP^{[i+1,t_2]}_{v,w}
           \Big).
     \end{eqnarray*}
  With $e_{u,v}$ denoting the row-vector with $+1$ at $u$-th column
  and $-1$ at $v$-th column and zeros elsewhere, we can rewrite and split this equation as follows:
  \begin{eqnarray*}
     (x^{(t_2)}-\xi^{(t_2)})_w  &=&
           \tsum_{i=1}^{t_1}
           \tsum_{(u,v)\in \MlOdd}
           \Psi_{u,v}^{(i)} \, \Phi_{u,v}^{(i)}
           \,
           \big( e_{u,v} \bP^{[i+1,t_2]} e_{w} \big) \\
           & & +
           \tsum_{i=t_1+1}^{t_2}
           \tsum_{(u,v)\in \MlOdd}
           \Psi_{u,v}^{(i)} \, \Phi_{u,v}^{(i)}
           \,
           \big( e_{u,v} \bP^{[i+1,t_2]} e_{w} \big) .
    \end{eqnarray*}
    Clearly,
    \begin{eqnarray*}
     & & \tsum_{i=t_1+1}^{t_2}
           \tsum_{(u,v)\in \MlOdd}
           \Psi_{u,v}^{(i)} \, \Phi_{u,v}^{(i)}           \,
           \big( e_{u,v} \bP^{[i+1,t_2]} e_{w} \big) \\
	     &=&  \tsum_{i=t_1+1}^{t_2} \left(
           \tsum_{(u,v)\in \MlOdd}
            \Psi_{u,v}^{(i)} \, \Phi_{u,v}^{(i)}
           \,e_{u,v}  \right) \cdot \left(  \bP^{[i+1,t_2]}   \,  e_{w}  \right).
     \end{eqnarray*}
     Observe that $
           \tsum_{(u,v)\in \MlOdd}
            \Psi_{u,v}^{(i)} \, \Phi_{u,v}^{(i)}
           \,  e_{u,v} $ is a vector all of whose entries are bounded by $1$ in absolute value. Since $\bP^{[i+1,t_2]}$ is a stochastic matrix, the sum of all entries of the $w$-th column of $\bP^{[i+1,t_2]}$, which is $|\bP^{[i+1,t_2]}   \,  e_{w}|_1$, is exactly one, hence
	\begin{eqnarray*}
	 \, \left| \tsum_{i=t_1+1}^{t_2} \left(
           \tsum_{(u,v)\in \MlOdd}
            \Psi_{u,v}^{(i)} \, \Phi_{u,v}^{(i)}
           \,e_{u,v}  \right) \cdot \left(  \bP^{[i+1,t_2]}   \,  e_{w}  \right)  \right|
    &\leq&   ({t_2}-t_1).
    \end{eqnarray*}
    It remains
    to bound \[ W_{\Odd} :=
           \tsum_{i=1}^{t_1}
           \tsum_{(u,v)\in \MlOdd}
           \Psi_{u,v}^{(i)} \, \Phi_{u,v}^{(i)}
           \,
           \big( e_{u,v} \bP^{[i+1,t_2]} e_{w} \big). \]
	   Because $W_{\Odd}$ is not necessarily a sum of independent
	   random variables, it will be more convenient to work with the following quantity (which is a sum of independent random variables, as it assumes that every balancer gets an excess token),
    \begin{align*}
    W:=
           \tsum_{i=1}^{t_1}
           \tsum_{(u,v)\in \Ml}
           \Psi_{u,v}^{(i)} \, \Phi_{u,v}^{(i)}           \,
           \big( e_{u,v} \bP^{[i+1,t_2]} e_{w} \big).
    \end{align*}
     So our strategy is first to bound the deviation of $W$ from its mean by Hoeffdings Bound and then apply the following lemma (whose proof is in \secref{independent}), which justifies using $W$ instead of $W_{\Odd}$.

\begin{lem}
    \label{lem:independent}
    Fix $0 \leq \alpha \leq \tfrac{1}{2}$. For all $t_1,t_2$ with $t_1 < t_2\leq T$ and
    arbitrary weights $w_{u,v}^{(i)}\in\R$, let
    \begin{align*}
    W_{\Odd}&:=\tsum_{i=t_1}^{t_2}
           \tsum_{(u,v)\in \MlOdd} \Psi_{u,v}^{(i)} \, \Phi_{u,v}^{(i)} \, w_{u,v}^{(i)},\\
       W_{\Even}&:=\tsum_{i=t_1}^{t_2}
           \tsum_{(u,v)\in \MlEven}
           \Psi_{u,v}^{(i)} \, \Phi_{u,v}^{(i)} \, w_{u,v}^{(i)},\\
       W&:=\tsum_{i=t_1}^{t_2}
           \tsum_{(u,v)\in \Ml}
           \Psi_{u,v}^{(i)} \, \Phi_{u,v}^{(i)} \, w_{u,v}^{(i)}.
    \end{align*}
    Then for any $\delta > 0$,
    \begin{align*}
        \Pr{  \left|W_{\Odd}\right| \geq \delta + 2 \max\{ |\Ex{W^+}|, |\Ex{W^-}|  \} }
        \leq  16 \, \Pr{|W| \geq \delta},
    \end{align*}
    where
     \begin{align*}
   W^{+}&:=\tsum_{i=t_1}^{t_2}
       \tsum_{\substack{ (u,v)\in \Ml \\ \Phi_{u,v}^{(i)} w_{u,v}^{(i)} > 0   }}
       \Psi_{u,v}^{(i)} \,\Phi_{u,v}^{(i)}
       w_{u,v}^{(i)}, \\
     W^{-}&:=\tsum_{i=t_1}^{t_2}
       \tsum_{\substack{ (u,v)\in \Ml \\ \Phi_{u,v}^{(i)} w_{u,v}^{(i)} < 0   }}
       \Psi_{u,v}^{(i)} \,\Phi_{u,v}^{(i)}
       w_{u,v}^{(i)}.
\end{align*}
\end{lem}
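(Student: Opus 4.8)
The plan is to relate $W_{\Odd}$ to $W$ by a symmetrization argument, decoupling the randomness of the perturbation ($\Psi$) from the randomness determining which balancers are odd ($\Odd$). Observe first that $W = W_{\Odd} + W_{\Even}$, where the split is determined by the initial load vector and the matchings, while the $\Psi^{(i)}_{u,v}$ are the independent perturbation bits. The key point is that, conditioned on the set $\MlOdd$ of odd balancers in each round, $W_{\Odd}$ and $W_{\Even}$ are sums of \emph{independent} random variables (the $\Psi$'s), because distinct balancers are perturbed independently and the partition into odd/even is measurable with respect to a $\sigma$-algebra independent of the $\Psi$'s is \emph{not} quite true, so instead I would condition on the odd/even partition and work conditionally. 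Write $\mu_{\Odd} := \Ex{W_{\Odd} \mid \mathcal{F}}$ and $\mu_{\Even} := \Ex{W_{\Even} \mid \mathcal{F}}$ where $\mathcal{F}$ is the $\sigma$-algebra generated by the partition; then $\mu_{\Odd}+\mu_{\Even} = \Ex{W\mid\mathcal F}$. Since the non-zero terms of $W^{+}$ (resp.\ $W^{-}$) are exactly those with $\Phi^{(i)}_{u,v} w^{(i)}_{u,v} > 0$ (resp.\ $<0$) and $\Ex{\Psi} = \tfrac12-\alpha \ge 0$, one checks $|\mu_{\Odd}| \le \max\{|\Ex{W^+}|,|\Ex{W^-}|\}$ termwise after bounding each conditional partial sum by the corresponding full sum over $\Ml$.

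Next I would remove the means and use an anti-concentration / ``one of two halves is large'' step. Conditioned on $\mathcal{F}$, both $W_{\Odd} - \mu_{\Odd}$ and $W_{\Even} - \mu_{\Even}$ are centered sums of independent bounded variables whose union is the centered version of $W$; hence by the triangle inequality, on the event $\{|W_{\Odd}| \ge \delta + 2\max\{|\Ex{W^+}|,|\Ex{W^-}|\}\}$ we have $|W_{\Odd} - \mu_{\Odd}| \ge \delta$ (using $|\mu_{\Odd}| \le \max\{\dots\} \le $ that quantity, and a factor-of-2 slack). So it suffices to show $\Pr{|W_{\Odd} - \mu_{\Odd}| \ge \delta} \le 16\,\Pr{|W - \Ex{W\mid\mathcal F}| \ge \delta}$ — no, more carefully, $\le 16\,\Pr{|W|\ge\delta'}$ for a suitable $\delta'$; here I would invoke a comparison inequality for sums of independent symmetric (after centering, near-symmetric since $\alpha$ may be $<\tfrac12$) random variables: if $S = S_1 + S_2$ with $S_1,S_2$ independent given $\mathcal F$, then $\Pr{|S_1| \ge t} \le 2\,\Pr{|S| \ge t}$ whenever $S_2$ has a median at $0$ — and the factor $16 = 2\cdot 2\cdot 2\cdot 2$ absorbs (i) the $W_{\Odd}$ vs.\ $W_{\Even}$ choice, (ii) centering asymmetry from $\alpha\ne\tfrac12$, (iii) the median-vs-mean gap, (iv) the conditioning on $\mathcal F$. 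The asymmetry is exactly the technical nuisance the paragraph before the lemma warns about: when $\alpha<\tfrac12$ the $\Psi$'s are not symmetric, so the clean reflection trick fails and one pays these constant factors, which is why the statement carries the explicit $2\max\{|\Ex{W^+}|,|\Ex{W^-}|\}$ correction and the crude constant $16$ rather than something tight.

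The main obstacle I anticipate is making the ``one of the two halves carries the deviation'' step rigorous in the non-symmetric regime: for symmetric summands one would use that negating an independent subset preserves the law, but here $\Ex{\Psi}=\tfrac12-\alpha$ breaks that symmetry, so I would instead decompose each $\Psi^{(i)}_{u,v}$ as a fixed bias $(\tfrac12-\alpha)$ plus a centered (but still not symmetric) bounded variable, push the bias into the explicit $\Ex{W^\pm}$ terms — splitting by the sign of $\Phi^{(i)}_{u,v}w^{(i)}_{u,v}$ precisely so the biased contribution is a sum of same-sign terms bounded by $\max\{|\Ex{W^+}|,|\Ex{W^-}|\}$ — and then apply a Lévy-type / Hoffmann-Jørgensen-type maximal inequality or a direct two-sided union bound to the remaining centered part. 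Handling the conditioning on $\mathcal F$ throughout (so that ``independent'' really means conditionally independent given the odd/even partition, which is itself a deterministic function of the load evolution and hence independent of the $\Psi$'s) is the bookkeeping that makes the argument go through; once that is set up, the inequality follows from standard comparison lemmas at the cost of the stated constant.
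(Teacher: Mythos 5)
Your overall architecture matches the paper's: condition on the odd/even partition $\MttOdd$, use that $W_{\Odd}$ and $W_{\Even}$ are independent given this conditioning, split by the sign of $\Phi^{(i)}_{u,v}w^{(i)}_{u,v}$ so that the bias of the $\Psi$'s can be charged to $\Ex{W^{\pm}}$, and argue that $W_{\Even}$ is unlikely to cancel a large deviation of $W_{\Odd}$. The genuine gap is in the last step, which you delegate to ``standard comparison lemmas'' (L\'evy-type, Hoffmann--J{\o}rgensen-type). Neither applies here: L\'evy's inequality requires symmetric summands, which fails for $\alpha<1/2$, and Hoffmann--J{\o}rgensen controls upper tails of maxima rather than providing the lower bound you need. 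What is actually required is an anti-concentration statement of the form $\Pro{X>2\Ex{X}}\leq 7/8$ for $X=\sum_i w_iX_i$ with $w_i\geq 0$ and $X_i$ the biased $\pm1$ variables (\lemref{posrattail} in the paper), applied to $W^{-}_{\Even}$, combined with $\Pro{W^{+}_{\Even}\geq 0}\geq 1/2$ (\lemref{simplelemma}). Together these give $\Pro{W_{\Even}\geq 2\Ex{W^{-}}\mid \MttOdd=\myM}\geq \tfrac12\cdot\tfrac18=\tfrac{1}{16}$, which is the true source of the constant $16$ --- not four independent factors of $2$ as you guess. \lemref{posrattail} does not follow from reverse Markov (since $X$ is not one-signed) and is proved in the paper by a nontrivial counting argument after reducing to integral weights; it is the heart of the whole lemma and is simply absent from your outline.

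Your fallback plan --- recentre $W_{\Even}$ at a median and absorb the ``median-vs-mean gap'' into a factor of $2$ --- also fails quantitatively. For a single even balancer with $\Phi w=c<0$ the term equals $c/2$ with probability $1-\alpha$ and $-c/2$ with probability $\alpha$, so its median is $c/2$ while $2\Ex{W^{-}}=(1-2\alpha)c\to 0$ as $\alpha\to\tfrac12$; the median can therefore lie far below $2\Ex{W^{-}}$, and the fixed correction $2\max\{|\Ex{W^{+}}|,|\Ex{W^{-}}|\}$ in the statement cannot absorb the discrepancy. This is precisely why the paper settles for probability $1/8$ of staying above \emph{twice} the partial mean instead of probability $1/2$ of staying above the median. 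Finally, your closing parenthetical that the odd/even partition is ``independent of the $\Psi$'s'' is false (the parities at round $i$ depend on earlier excess-token placements, hence on earlier $\Psi$'s); the paper's proof avoids this by conditioning on $\MttOdd=\myM$ and using only that, given $\myM$, the two sub-sums involve disjoint sets of balancers and that $\Ex{\Psi^{(i)}_{u,v}\mid\MttOdd=\myM}=\Ex{\Psi^{(i)}_{u,v}}$ for even balancers.
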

In order to apply Lemma \ref{lem:independent} we first need to bound the expectation:
    \begin{align*}
     \left|\Ex{W} \right|&= \left|
        \tsum_{i=1}^{t_1}
        \tsum_{(u,v)\in \Ml}
         \Ex{ \Psi_{u,v}^{(i)} \, \Phi_{u,v}^{(i)} }
        \,
        e_{u,v} \bP^{[i+1,t_2]} e_{w}\right| \\
     &\leq
	\max_{i=1}^{t_1}
        \max_{(u,v)\in \Ml} \left| \Ex{ \Psi_{u,v}^{(i)} \, \Phi_{u,v}^{(i)}}  \right| \\
        &\phantom{\leq \max_{i=1}^{t_1} \max_{(u,v)\in \Ml}}
        \cdot
	    \left|\tsum_{i=1}^{t_1} \tsum_{(u,v)\in \Ml} e_{u,v} \bP^{[i+1,t_2]} e_{w} \right|.
      \end{align*}
      As we explained before, the last term is at most $t_1$, and for any $i \in [1,t_1]$ and $(u,v) \in \Ml$,
       $\Ex{ \Psi_{u,v}^{(i)} \, \Phi_{u,v}^{(i)} } \leq (1/2-  \alpha).$ Thus,
      \begin{align}
    \left| \Ex{W} \right| \leq
        t_1 \, ( 1/2 -\, \alpha ). \notag
        \label{eq:EW}
    \end{align}
    Similarly,
    $\left| \Ex{W^+} \right| \leq t_1 \, ( 1/2 -\, \alpha )$
    and $\left| \Ex{W^-} \right| \leq t_1 \, ( 1/2 -\, \alpha )$.

    To apply Hoeffdings bound on $W$, we bound the sum of the squared ranges of
    the involved random variables as follows:
    \begin{align*}
    & \tsum_{i=1}^{t_1}
           \tsum_{(u,v)\in \Ml}
           \Range{\Psi_{u,v}^{(i)} \, \Phi_{u,v}^{(i)}
           \, \cdot e_{u,v}\bP^{[i+1,t_2]}e_w
           }^2 \\
    &\leq \tsum_{i=1}^{t_1}
           \tsum_{(u,v)\in \Ml}
           \left(e_{u,v}\bP^{[i+1,t_2]}e_w \right)^2
     =:  \gamma.
     \end{align*}
    Now by Hoeffdings Bound,
    \begin{align*}
     & \Pr{|W| \geq |\Ex{W}| + \varepsilon}
        \leq \Pr{ |W - \Ex{W}| \geq \varepsilon}\\
        &\leq 2 \, \exp \left(- 2 \varepsilon^2 \Big/ \tsum_{i=1}^{t_1}
           \tsum_{(u,v)\in \Ml}
           \Range{\Psi_{u,v}^{(\ell)} \, \Phi_{u,v}^{(i)}
           \,
           \Big(
           e_{u,v}\bP^{[i+1,t_2]}e_w
           \Big)}^2  \right)\\
        &\leq 2 \, \exp \left(- 2 \varepsilon^2/\gamma \right).
    \end{align*}
    Choosing $\epsilon=\sqrt{\gamma \, \log n}$ we get
    $
        \Pr{ |W | \geq | \Ex{W}| + \sqrt{\gamma  \log n}} \leq 2 n^{-2}.
    $
    Hence by \lemref{independent} we obtain with
    $\delta=|\Ex{W}| + \sqrt{\gamma \, \log n}$ that
    \begin{align*}
        &\Pr{ |W_{\Odd}| \geq |\Ex{W}| + 2 \max \{ | \Ex{W^+}|, |\Ex{W^-}| \}   + \sqrt{\gamma \log n}}\\
        &\leq 16  \Pr{ |W| \geq |\Ex{W}| + \sqrt{\gamma \log n}} \\
        &\leq 32 \, n^{-2}.
    \end{align*}
    As $t_1 \, (1/2 - \alpha)$ is an upper bound on each of $ |\Ex{W}|,  | \Ex{W^+}|$ and $|\Ex{W^-}|$,
    this readily implies
    \begin{align*}
     \Pr{ |W_{\Odd}| \geq 3  \, (1/2- \alpha) t_1  + \sqrt{\gamma  \log n}} &\leq 32 n^{-2}.
    \end{align*}
    Finally, taking the union bound, we conclude by \eq{xminusxi} that for all vertices $w \in V$
    \begin{align*}
      \Pro{ \bigwedge_{w \in V}
            \left( \left| \xi_w^{(t_2)} - x_w^{(t_2)} \right|
            \leq 3 \, ( 1/2 - \alpha ) t_1
                 + \gamma\sqrt{  \log n} \right) }
            &\geq 1 - 32 n ^{-1}.
         \qedhere
    \end{align*}


\subsection{Proof of \lemref{independent}}
\label{sec:independent}

Before we begin the proof of \lemref{independent},
we require the following two technical lemmas.

\begin{lem}
    \label{lem:simplelemma}
    For $0\leq \alpha \leq 1/2$,
    arbitrary $w_i \in \mathbb{R}$,
    and independent random variables $X_i$ that are $-1$ with probability $\alpha$
    and $1$ otherwise, let $X:=\sum_{i=1}^n w_i X_i$.  Then,
    \begin{itemize}
        \item If all $w_i \geq 0$, then
        $\Pro{ X < 0} \leq 1/2$,\\and for any number $\delta \geq 0$, $\Pro{ X > \delta} \geq \Pro{ X < - \delta}$.
        \item If all $w_i \leq 0$, then
        $\Pro{ X > 0} \leq 1/2$,\\and for any number $\delta \geq 0$, $\Pro{ X < - \delta} \geq \Pro{ X > \delta}$.
    \end{itemize}
\end{lem}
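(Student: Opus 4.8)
The plan is to reduce everything to a single monotone coupling between the $\alpha$-biased signs and the $(1-\alpha)$-biased signs. First, observe that the second bullet is just the first applied to the weights $-w_i$: if all $w_i \le 0$ then $v_i := -w_i \ge 0$, and writing $\widehat{X} := \sum_i v_i X_i = -X$, the first bullet yields $\Pr{\widehat{X} < 0} \le 1/2$ and $\Pr{\widehat{X} > \delta} \ge \Pr{\widehat{X} < -\delta}$, which is exactly $\Pr{X > 0} \le 1/2$ and $\Pr{X < -\delta} \ge \Pr{X > \delta}$. So from here on I would assume all $w_i \ge 0$ (discarding indices with $w_i = 0$, which changes nothing; if all weights vanish the statement is trivial).

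Next I would set up the coupling. Let $U_1,\ldots,U_n$ be i.i.d.\ uniform on $[0,1]$, and for a parameter $q \in [0,1]$ put $X_i^{(q)} := -1$ if $U_i \le q$ and $X_i^{(q)} := +1$ otherwise, with $X^{(q)} := \sum_i w_i X_i^{(q)}$. Then $X^{(\alpha)}$ has the distribution of the $X$ in the statement; moreover, for each fixed sample point $q \mapsto X_i^{(q)}$ is non-increasing, so for $q \le q'$ we have $X_i^{(q)} \ge X_i^{(q')}$ pointwise, and since every $w_i \ge 0$, also $X^{(q)} \ge X^{(q')}$ pointwise. This monotone coupling is the only tool needed.

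For the first claim, compare $q = \alpha$ with $q = 1/2$: since $\alpha \le 1/2$ we get $X^{(\alpha)} \ge X^{(1/2)}$ pointwise, hence $\{X^{(\alpha)} < 0\} \subseteq \{X^{(1/2)} < 0\}$ and $\Pr{X < 0} \le \Pr{X^{(1/2)} < 0}$. But $X^{(1/2)}$ is symmetric about $0$ (its summands are symmetric $\pm 1$ and independent), so $\Pr{X^{(1/2)} < 0} = \Pr{X^{(1/2)} > 0}$, and since these two probabilities (plus the mass at $0$) sum to $1$, each is at most $1/2$; the claim follows.

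For the second claim — the one genuine twist — I would rewrite $\Pr{X < -\delta} = \Pr{-X > \delta}$ and note that $-X = \sum_i w_i (-X_i)$, where $-X_i$ equals $-1$ with probability $1-\alpha$. Thus $-X$ has exactly the law of $X^{(1-\alpha)}$. Since $\alpha \le 1/2 \le 1-\alpha$, the monotone coupling gives $X = X^{(\alpha)} \ge X^{(1-\alpha)}$ pointwise, so $\Pr{X > \delta} \ge \Pr{X^{(1-\alpha)} > \delta} = \Pr{-X > \delta} = \Pr{X < -\delta}$, as required. The main (minor) obstacle is exactly this reflection step: directly comparing the $\delta$-tails of $X$ with those of the symmetric variable $X^{(1/2)}$ does \emph{not} give the inequality, so one must recognize that the reflected variable $-X$ is distributionally the $(1-\alpha)$-version of the same weighted sum and is therefore dominated by the $\alpha$-version under the coupling. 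Everything else is routine bookkeeping.
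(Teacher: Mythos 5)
Your proof is correct, and it rests on the same engine as the paper's: stochastic domination of the $\alpha$-biased sum over a more negatively biased one, which the paper simply asserts and you justify with an explicit monotone coupling $X^{(q)}$ in the bias parameter $q$. The reduction to $w_i\ge 0$ and the first bullet are identical to the paper's. For the tail inequality you route through $X^{(1-\alpha)}\stackrel{d}{=}-X$ and apply domination once, which is clean and valid. One correction to your closing remark, though: comparing with the symmetric variable $X^{(1/2)}$ \emph{does} give the inequality --- this is exactly what the paper does, via
\[
\Pro{X>\delta}\;\ge\;\Pro{X^{(1/2)}>\delta}\;=\;\Pro{X^{(1/2)}<-\delta}\;\ge\;\Pro{X<-\delta},
\]
using domination once on the upper tail and once on the lower tail, with the reflection happening at the symmetric intermediate variable rather than at $X$ itself. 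So your ``genuine twist'' is a legitimate alternative, not a necessity; both reflections work because $X^{(1/2)}$ sits between $X^{(\alpha)}$ and $X^{(1-\alpha)}$ in the coupling order and is its own reflection in distribution.
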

\begin{proof}
   Note that it suffices to prove the statement with all $w_i \geq 0$, as the case with $w_i \leq 0$ follows from the first
    by considering $X':= \sum_{i=1}^n w_i' X_i$ with $w_i' := -w_i$.
    Let $Y:=\sum_{i=1}^n w_i Y_i$ where each $Y_i$ is an independent and uniform
    random variable taking values in $\{-1,+1\}$
    (corresponding to $X$ with $\alpha=1/2$).
    As $Y$ is a sum of symmetrical distributed random variables,
    we have $\Pro { Y < 0} \leq 1/2$. Since for any $\alpha \leq 1/2$,
    $X$ stochastically dominates $Y$, we obtain
    $
        \Pro{ X < 0} \leq \Pro{Y < 0} \leq 1/2,
    $
    and the first claim of the lemma follows.

    The second claim is proven similarly by observing that
    \[
     \Pro{ X > \delta} \geq \Pro{ Y >  \delta}
        = \Pro{ Y < -\delta}
        \geq \Pro{ X < - \delta}.
        \qedhere
    \]
\end{proof}

The following lemma bounds the probability that a sum $X:=\sum_{i=1}^n w_i X_i$ deviates
by more than a factor two from its expectation.

\begin{lem}
    \label{lem:posrattail}\label{lem:negrattail}
    For $0\leq \alpha \leq 1/2$,
    arbitrary $w_i\in\Q$,
    and $X_i$ independent random variables that are $-1$ with probability $\alpha$
    and $1$ otherwise, let $X:=\sum_{i=1}^n w_i X_i$.  Then,
    \begin{enumerate}
     \item If all $w_i \geq 0$, then  $
        \Pro{ X > 2 \Ex{X}} \leq 7/8.
    $
     \item If all $w_i \leq 0$, then  $
        \Pro{ X < 2 \Ex{X}} \leq 7/8.
    $
    \end{enumerate}

\end{lem}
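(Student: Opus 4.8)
The plan is to prove part~(1), the case $w_i\ge 0$, by an explicit coupling, and then to obtain part~(2) by applying part~(1) to $-X=\sum_i(-w_i)X_i$. Note that $w_i\ge 0$ makes $\Ex{X}=(1-2\alpha)\sum_i w_i\ge 0$, so the threshold $2\Ex{X}$ appearing in part~(1) is nonnegative; this nonnegativity is what the argument exploits, and it is exactly what fails when the $w_i$ have mixed signs, which explains why the two cases are stated separately.

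First I would decompose each $X_i$ as a mixture of a ``drift'' part and a ``noise'' part. Introduce i.i.d.\ Bernoulli$(1-2\alpha)$ variables $I_1,\dots,I_n$ and i.i.d.\ uniform $\{-1,+1\}$ variables $Y_1,\dots,Y_n$, all independent, and set $X_i:=I_i+(1-I_i)Y_i$ (valid since $0\le 1-2\alpha\le 1$): then $X_i=+1$ when $I_i=1$ and $X_i=Y_i$ otherwise, so $\Pr{X_i=1}=(1-2\alpha)+2\alpha\cdot\tfrac12=1-\alpha$ and the $X_i$ remain independent, matching the hypothesis. Writing $X=A+B$ with $A:=\sum_i w_i I_i$ and $B:=\sum_i w_i(1-I_i)Y_i$, the term $A$ is nonnegative with $\Ex{A}=\Ex{X}$, and conditionally on $(I_i)_i$ the term $B$ is a sum of independent symmetric variables, hence has a distribution symmetric about $0$.

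Then I would assemble the bound as follows. By Markov applied to $A\ge 0$ (and trivially if $\Ex{A}=0$), $\Pr{A\le 2\Ex{X}}\ge\tfrac12$. Since $\{A\le 2\Ex{X}\}$ is measurable with respect to $(I_i)_i$ and, conditionally on $(I_i)_i$, symmetry of $B$ gives $\Pr{B\le 0\mid (I_i)_i}\ge\tfrac12$ pointwise, we get
\[
\Pr{X\le 2\Ex{X}}\ge\Pr{A\le 2\Ex{X},\,B\le 0}=\Ex{\mathbf{1}[A\le 2\Ex{X}]\cdot\Pr{B\le 0\mid (I_i)_i}}\ge\tfrac12\,\Pr{A\le 2\Ex{X}}\ge\tfrac14,
\]
where the first inequality uses that $A\le 2\Ex{X}$ and $B\le 0$ together force $X=A+B\le 2\Ex{X}$. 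Hence $\Pr{X>2\Ex{X}}\le\tfrac34\le\tfrac78$, which is part~(1); part~(2) then follows by applying this to the nonnegative weights $-w_i$, since $\Pr{X<2\Ex{X}}=\Pr{-X>2\Ex{-X}}$.

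The hardest point to get right is not any single step but the choice of decomposition: a plain Chebyshev or Cantelli estimate of $X$ about its mean is useless when $\alpha$ is close to $\tfrac12$, because then $\Ex{X}$ is much smaller than the standard deviation even though the distribution is nearly symmetric; and a plain Markov-type bound is useless when $\alpha$ is small, where one instead has $X\le\sum_i w_i\le 2\Ex{X}$ deterministically. The mixture representation is precisely what interpolates between these regimes: the mass $1-2\alpha$ sitting at $+1$ carries all of the expectation and is controlled by Markov, while the remaining mass $2\alpha$ is a fair $\pm1$ noise controlled by symmetry. Beyond that, the only thing to be careful about is the conditioning and measurability bookkeeping in the displayed chain, which is routine.
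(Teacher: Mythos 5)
Your proof is correct, and it takes a genuinely different route from the paper's. The paper argues by contradiction: assuming $\Pro{X > 2\Ex{X}} > 7/8$, it reduces to integral weights, applies the layer-cake identity $\sum_{x\geq k} x \Pro{X=x} = \sum_{x\geq 1}\Pro{X\geq \max(k,x)}$ together with the tail comparison of \lemref{simplelemma} (stochastic domination by the symmetric $\alpha=1/2$ case) to relate the upper and lower tails, and derives $\Ex{X} \geq \tfrac{3}{2}\Ex{X}$ with $\Ex{X}>0$, a contradiction. Your mixture decomposition $X_i = I_i + (1-I_i)Y_i$ with $I_i\sim\mathrm{Bernoulli}(1-2\alpha)$ isolates the entire drift in the nonnegative variable $A=\sum_i w_i I_i$ (controlled by Markov, $\Pro{A\leq 2\Ex{X}}\geq 1/2$) and leaves a conditionally symmetric remainder $B$ (so $\Pro{B\leq 0\mid (I_i)_i}\geq 1/2$ pointwise), whence $\Pro{X\leq 2\Ex{X}}\geq 1/4$; the degenerate case $\Ex{X}=0$ and the measurability of $\{A\leq 2\Ex{X}\}$ with respect to $(I_i)_i$ are both handled correctly, and the reduction of part (2) to part (1) via $-X$ is the same in both proofs. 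Your argument buys three things: it avoids the rationality/integrality reduction entirely (it works for arbitrary real weights, whereas the lemma as stated restricts to $w_i\in\Q$ precisely to enable the paper's counting step); it is shorter and avoids the somewhat delicate choice of the threshold $k$; and it yields the stronger constant $3/4$ in place of $7/8$, which would propagate to a slightly better constant in \lemref{independent}. The paper's approach, in exchange, needs no auxiliary randomization and reuses \lemref{simplelemma}, which is already established for other purposes.
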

\begin{proof}As before, it suffices to prove the first statement.
    Moreover, we may focus on the case $\Ex{X} \neq 0$, as otherwise $X$ is symmetrically
    distributed around $0$.
    Finally, we also assume the weights to be integral, that is, $w_i\in\Z_{\geq0}$.
    Rational weights $w_i\in\Q_{\geq0}$ can be easily reduced to integral
    weights by multiplying all weights with their least common multiple
    of the denominators and applying the bound for the integral case.

    For the sake of contradiction, suppose that $\Pr{ X > 2 \Ex{X}} > 7/8$.  As $0\leq \alpha \leq 1/2$, we have
    $\Ex{X}= \sum_{i=1}^n w_i (1 - 2 \alpha) \geq0$ and
    hence $\Pr{ X \leq 0} \leq \Pr{ X \leq 2 \Ex{X}} < 1/8$.
    Let $k>2\Ex{X}$ be such that
    $\Pr{ X \geq k} \geq 1/8,$ and $\Pr{ X  >   k}    < 1/8.$

    As we assumed the weights $w_i$ to be integral, we can use the following two
    well-known counting tricks:
    \begin{align*}
        \sum_{x\geq k} x \Pr{X=x}
            &= \sum_{x\geq k} \sum_{y=1}^x \Pr{X=x}\\
            &= \sum_{y\geq 1} \sum_{x\geq\max(k,y)} \Pr{X=x}\\
            &= \sum_{x\geq 1} \Pr{X\geq\max(k,x)},
	    \end{align*}
  and similarly, $ \sum_{x\leq -k} -x \Pr{X=x} = \sum_{x\geq 1} \Pr{X\leq -\max(k,x)}$.

    We use both to obtain
    \begin{align*}
        \sum_{x\geq k} x \Pr{X=x}
            &= \sum_{x\geq 1} \Pr{X\geq\max(k,x)} \\
            &= \sum_{1\leq x<k} \Pr{X\geq k}
               + \sum_{x\geq k} \Pr{X\geq x} \\
            &> \sum_{1\leq x<k} \left( \Pr{X\leq 0} + \Pr{X\geq k}-1/8 \right)\\
            &\phantom{>} + \sum_{x\geq k} \Pr{X\geq x}. \\
    \end{align*}
    By \lemref{simplelemma} we now get
    \begin{align*}
        \sum_{x\geq k} x \Pr{X=x}
            &\geq k\,(\Pr{X\geq k}-1/8)
               + \sum_{1\leq x<k} \Pr{X\leq -x}\\
            &\phantom{\geq} + \sum_{x\geq k} \Pr{X\leq -x} \\
            &= k\,( \Pr{X\geq k}-1/8)
               + \sum_{x\geq 1} \Pr{X\leq -x} \\
            &= k\,(\Pr{X\geq k}-1/8)
               + \sum_{x\leq -1} -x \Pr{X=x}
    \end{align*}

    \noindent
    Plugging this in the definition of $\Ex{X}$, we get
    \begin{align*}
        \Ex{X} &= \sum_{x} x \Pr{X=x}\\
            &= \sum_{x\leq -1} x \Pr{X=x} +
              \sum_{0\leq x<k} x \Pr{X=x} +
              \sum_{x\geq k} x \Pr{X=x} \\
            &\geq k\,(\Pr{X\geq k}-1/8) + \sum_{0\leq x<k} x \Pr{X=x}.
    \end{align*}
    Using above assumptions on $k$ and $\Pr{X \geq 2\Ex{X}} > 7/8$ we now arrive at the desired contradiction,
    \begin{align*}
        \Ex{X}
            &> 2\Ex{X}\,(\Pr{X\geq k}-1/8) + \sum_{2\Ex{X}\leq x<k} x \Pr{X=x} \\
            &\geq 2\Ex{X}\,\bigl( \Pr{X\geq k}-1/8 + \Pr{2\Ex{X}\leq X<k}\bigr) \\
            &\geq 2\Ex{X}\,\left( \Pr{X\geq2\Ex{X}}-1/8  \right) \\
            &\geq 3/2 \Ex{X}.
            \qedhere
    \end{align*}
\end{proof}

\noindent
We are now ready to prove \lemref{independent}.
In the following, we will use subsums of $W$ denoted as $W_{\Even}^+, W_{\Even}^-,
W_{\Odd}^+,W_{\Odd}^-$ which are defined by combining the previous definitions
in a natural way, e.~g.,
\[
 W_{\Odd}^+ := \tsum_{i=t_1}^{t_2}
       \tsum_{\substack{ (u,v)\in \MlOdd \\ \Phi_{u,v}^{(i)} w_{u,v}^{(i)}  > 0   }}
       \Psi_{u,v}^{(i)} \,\Phi_{u,v}^{(i)}
       w_{u,v}^{(i)}.
\]

\noindent
Let $\Mtt=\bigcup_{i=t_1}^{t_2} \Ml$ be the set of all matching edges in the given time span.
Let $\MttOdd$ be the set of all odd ones.
Moreover, let us simply write $\cM$ for a particular
assignment of $\Odd(.)$ for all balancers in $\Mtt$.
We now begin by bounding $W_{\Odd}$ in terms of $W$.
By the definition of conditional probability and the law of total probabilities,
it follows for arbitrary $\delta_1,\delta_2\in\R$,
\begin{align*}
    &\Pr{W<\delta_2 \mid W_{\Odd}\geq \delta_1}
    =
    \frac
        {\Pr{W<\delta_2 \wedge W_{\Odd}\geq\delta_1}}
        {\Pr{W_{\Odd}\geq\delta_1}}\notag\\
    &\leq
    \frac
        {\Pr{W-W_{\Odd}<\delta_2-\delta_1 \wedge W_{\Odd}\geq\delta_1}}
        {\Pr{W_{\Odd}\geq\delta_1}}\notag\\
    &=
    \sum_{\myM}
        \frac{
            \Pr{\MttOdd=\myM}
            \Pr{W_{\Even}<\delta_2-\delta_1 \wedge
            W_{\Odd}\geq\delta_1 \mid \MttOdd=\myM}
        }{\Pr{W_{\Odd}\geq\delta_1}} \notag.
\end{align*}

\noindent
Note that for fixed $\myM$, the ranges of $W_{\Odd}$ and $W_{\Even}$ are determined and
therefore the probability spaces are disjoint.  This implies that $W_{\Odd}$ and $W_{\Even}$
are independent conditioned on $\MttOdd=\myM$.  Using this observation, we get
\begin{align}
    &\Pr{W<\delta_2 \mid W_{\Odd}\geq \delta_1} \notag\\[-.3cm]
    &\qquad\leq
    \sum_\myM
        \Pr{\MttOdd=\myM}
        \overbrace{\Pr{W_{\Even}<\delta_2-\delta_1 \mid \MttOdd=\myM}}^{(*)}
        \label{eq:ThomasFavorite4}\\[-.1cm]
        &\qquad \qquad \qquad \cdot\Pr{W_{\Odd}\geq\delta_1 \mid \MttOdd=\myM}
         \Big/ \Pr{W_{\Odd}\geq\delta_1}.\notag
\end{align}

\noindent
By plugging
$\delta_2 := \delta + 2 \, \Ex{W^{-}}$
and
$\delta_1 := \delta$
in (*) we obtain
\begin{align}
    &\Pr{W_{\Even} <2 \, \Ex{W^{-}} \mid \MttOdd=\myM} \notag\\
	&\qquad= 1 - \Pr{W_{\Even} \geq 2 \, \Ex{W^{-}} \mid \MttOdd=\myM} \notag \\
	&\qquad\leq 1 - \Pr{ W^+_{\Even} \geq 0 \mid \MttOdd=\myM} \notag\\
	&\qquad\phantom{\leq}\cdot \Pr{ W^-_{\Even} \geq 2 \, \Ex{W^{-}} \mid \MttOdd=\myM}. \notag
\end{align}

\noindent
Observing that $\Ex{\Psi_{u,v}} = \Ex{\Psi_{u,v} \mid \MttOdd=\myM}$
for $(u,v)\in \MlEven$, we get $\Ex{W^{-}} \leq \Ex{W^{-}_{\Even} \mid  \MttOdd=\myM}$
and hence by \lemrefs{simplelemma}{negrattail},
\begin{align}
	& \Pr{W_{\Even} <2 \, \Ex{W^{-}} \mid \MttOdd=\myM} \notag \\
	&\qquad\leq 1 - \Pr{ W^+_{\Even} \geq 0 \mid \MttOdd=\myM} \cdot \notag\\
	&\qquad\qquad\Pr{ W^-_{\Even} \geq 2 \, \Ex{W^{-}_{\Even} \, \mid \, \MttOdd=\myM} \mid \MttOdd=\myM} \notag \\
	&\qquad\leq 1 - \tfrac{1}{2} \cdot \tfrac{1}{8}
    = \tfrac{15}{16}.
    \label{eq:applytail}
\end{align}
Plugging this into \eq{ThomasFavorite4} yields
\begin{align*}
    & \Pr{W< \delta + 2 \Ex{W^-}  \mid W_{\Odd}\geq \delta} \\
    &\qquad\leq \frac{15}{16} \, \sum_\myM
        \frac{
            \Pr{\MttOdd=\myM}
                \Pr{W_{\Odd}\geq\delta_1 \mid \MttOdd=\myM}
        }{\Pr{W_{\Odd}\geq\delta_1}} = \frac{15}{16},
\end{align*}
and hence
\begin{align}
   &\Pr{ W \geq \delta + 2 \Ex{W^-}}\notag\\
   &\qquad\geq \Pr{ W \geq \delta + 2 \Ex{W^-} \wedge W_{\Odd} \geq \delta} \notag\\
   &\qquad= \Pr{W_{\Odd} \geq \delta} \cdot \Pr{W \geq \delta + 2 \Ex{W^-}
      \, \mid \, W_{\Odd} \geq \delta} \notag\\
   &\qquad\geq \tfrac{1}{16} \cdot \Pr{ W_{\Odd} \geq \delta}. \label{eq:firststep}
\end{align}

\noindent
It remains to lower bound the deviation of $W_{\Odd}$ in terms of $W$ in a similar fashion.
As before, we derive
\begin{align}
    &\Pr{W>\delta_2 \mid W_{\Odd}\leq \delta_1} \notag\\[-.3cm]
    &\qquad\leq
    \sum_\myM
        \Pr{\MttOdd=\myM}
        \overbrace{\Pr{W_{\Even}>\delta_2-\delta_1 \mid \MttOdd=\myM}}^{(*)}
        \label{eq:ThomasFavorite2}\\[-.1cm]
        &\qquad \qquad \qquad \cdot\Pr{W_{\Odd}\leq\delta_1 \mid \MttOdd=\myM}
         \Big/ \Pr{W_{\Odd}\leq\delta_1}.\notag
\end{align}
We now choose
$
    \delta_2 := -\delta + 2 \, \Ex{W^{+}},$ and
$\delta_1 := -\delta$.
As before in \eq{applytail}, we can now use \lemrefs{simplelemma}{posrattail}
to get for (*) that
\begin{align*}
    \Pr{W_{\Even} > 2 \, \Ex{W^{+}} \mid \MttOdd=\myM}
    &\leq \tfrac{15}{16}.
\end{align*}
 Plugging this into \eq{ThomasFavorite2} yields
\[
    \Pr{W> - \delta + 2 \Ex{W^+} \mid W_{\Odd}\leq -\delta}
    \leq
    15/16
\]
 and hence
\begin{align}
    &\Pr{ W \leq -\delta + 2 \Ex{W^+}}\notag\\
    &\qquad\geq \Pr{ W \leq -\delta + 2 \Ex{W^+} \wedge W_{\Odd} \leq -\delta} \notag \\
   &\qquad= \Pr{W_{\Odd} \leq -\delta} \cdot \Pr{W \leq -\delta + 2 \Ex{W^+}   \, \mid \, W_{\Odd} \leq -\delta} \notag \\
   &\qquad\geq \tfrac{1}{16} \cdot \Pr{ W_{\Odd} \leq -\delta}  \label{eq:secondstep}.
\end{align}
Combining \eq{firststep} and \eq{secondstep}, we have shown for any $\delta \geq 0$,
\begin{align*}
 &\Pro{ |W_{\Odd}| \geq \delta} \\
 &\qquad= \Pro{ W_{\Odd} \geq \delta} + \Pro{ W_{\Odd} \leq -\delta} \\
 &\qquad\leq 16 \, \Pro{ W \geq \delta + 2 \Ex{W^{-}}} + 16 \, \Pro{ W \leq - \delta + 2 \Ex{W^{+}}} \\
 &\qquad\leq 16 \, \Pro{ |W| \geq \delta - 2 \max \{ |\Ex{W^+}|, |\Ex{W^-}| \} }.
\end{align*}
Adding $2 \max \{ |\Ex{W^+}|, |\Ex{W^-}| \}$ to $\delta$ gives the assertion of the lemma.


\section{Deriving the Upper Bound for the Periodic Case}
\label{sec:DerivationOfPeriodicCase}

Consider the network after $\frac{2 \log(Kn)}{1-\lambda(\bQ)}$ (periodic) repetitions of the balancing network (each repetition consists of $d$ rounds, so the network consists of a total of $T=d\cdot\frac{2 \log(Kn)}{1-\lambda(\bQ)}$
matchings). $\bQ$ is the so-called round matrix corresponding to the application of $d$ consecutive matchings
(recall that we use the following abbreviations $\bP^{[t_1,t_2]}=\tprod_{i=t_1}^{t_2} \bP^{(i)}$).
This is indeed a special case of our general scheme where the matrices $\bP^{(1)},\bP^{(2)},\ldots,\bP^{(T)}$ are applied,
but now $\bQ=\bP^{[1,d]}=\bP^{[d+1,2d]}$ and so on.
Recall the notation that $e_{u,v}$ denotes the row-vector with $+1$ at $u$-th column and $-1$ at $v$-th column.
\medskip
In \thmref{upper} we plug $t_2=T=\frac{2d\log(Kn)}{1-\lambda(\bQ)}$
and $t_1=T-\frac{2d \log\log (n)}{1-\lambda(\bQ)}$.
Hence,
   \begin{align*}
    \Lambda_1 &= \max_{w\in V} \sqrt{\log n \, \tsum_{i=1}^{t_1}
           \tsum_{(u,v)\in \Ml}
           \left(e_{u,v}\bP^{[i+1,t_2]}e_w \right)^2}.
   \end{align*}
  Let $v_1^{(i)},v_2^{(i)},\ldots,v_n^{(i)}$ be the eigenvectors of $\bP^{[i+1,t_2]}$ forming an orthogonal basis of $\mathbb{R}^n$ and let $\lambda_1^{(i)},\lambda_2^{(i)},\ldots,\lambda_n^{(i)}$ be the corresponding eigenvalues.
  Since $\bP^{[i+1,t_2]}$ is a stochastic matrix, $v_1^{(i)} = \one$.
  Write $e_w = \sum_{j=1}^n c_j^{(i)} v_j^{(i)}$. Using this, we rewrite $\Lambda_1$ as follows (for short we drop
  the ``$\max_w$'' part as our final result will not depend on $w$),
  \begin{align*}
   \Lambda_1 &= \sqrt{\log n \, \tsum_{i=1}^{t_1}
           \tsum_{(u,v)\in \Ml}
           \left(e_{u,v}\bP^{[i+1,t_2]} \sum_{j=1}^n c_j^{(i)} v_j^{(i)} \right)^2} \\
	   &= \sqrt{\log n \, \tsum_{i=1}^{t_1}
           \tsum_{(u,v)\in \Ml}
           \left(e_{u,v} \sum_{j=1}^n c_j^{(i)} \lambda_j^{(i)} v_j^{(i)}  \right)^2}.
  \end{align*}
Observe that $e_{u,v}$ is orthogonal to $v_{1}^{(i)}$. Therefore $c_1^{(i)}=0$ for every $i$ which gives
    \begin{align*}
   \Lambda_1 &= \sqrt{\log n \, \tsum_{i=1}^{t_1}
           \tsum_{(u,v)\in \Ml}
           \left(e_{u,v} \sum_{j=2}^n c_j^{(i)} \lambda_j^{(i)} v_j^{(i)}  \right)^2}.
   \end{align*}
  Define a vector $z^{(i)} := \sum_{j=2}^n c_j^{(i)} \lambda_j^{(i)} v_j^{(i)}$. The latter is then
     \begin{align*}
   \Lambda_1 &= \sqrt{\log n \, \tsum_{i=1}^{t_1}
           \tsum_{(u,v)\in \Ml}
           \left(e_{u,v} z^{(i)} \right)^2} \\
	   &= \sqrt{\log n \, \tsum_{i=1}^{t_1}
           \tsum_{(u,v)\in \Ml}
           \left(z_u^{(i)} - z_v^{(i)} \right)^2} \\
	   &\leq \sqrt{\log n \, \tsum_{i=1}^{t_1}
           \tsum_{(u,v)\in \Ml} 2 (z_u^{(i)})^2 + 2 (z_v^{(i)})^2} .
   \end{align*}
   Since $\Ml$ is a matching, for each $i$ each vertex is counted only
   at most once, thus
   \begin{equation}\label{eq:letzter}
    \Lambda_1 \leq 2 \sqrt{\log n \, \tsum_{i=1}^{t_1} \| z^{(i)} \|^2}.
   \end{equation}

   \noindent
   By standard calculation (cf.~\secref{spectral}), we have
   \begin{align*}
       \| z^{(i)} \|^2 &\leq \lambda(\bP^{[i+1,t_2]})^2 \cdot \| e_w \|^2 = \lambda(\bP^{[i+1,t_2]})^2.
   \end{align*}
   Plugging this into (\ref{eq:letzter}) and using the facts that for any two stochastic matrices $\bA$ and $\bB$, $\lambda(\bA \bB) \leq \lambda(\bA)$ and for any integer $k$, $\lambda(\bA^{k})= \lambda(\bA)^k$,
   \begin{align*}
    \Lambda_1 
    &\leq 2 \sqrt{ \log n \, \tsum_{i=1}^{t_1} \lambda( \bP^{[i+1,t_2]})^2} \\
    &\leq 2 \sqrt{ \log n \, \tsum_{i=1}^{t_1} \lambda( \bQ^{\lfloor (t_2-(i+1))/d \rfloor} )^2}.
   \end{align*}
  Regrouping the matrices according to the periods the latter reduces to
  \[ 
  \Lambda_1 
  \leq \sqrt{\log n \, \tsum_{i=1}^{t_1/d} d \, \lambda( \bQ )^{2 (t_2/d - (i+1)) }}.   \]
   With $\rho=((t_2- t_1)/d)-1$, we can upper bound $\Lambda_1$ by
     \[ 2 \sqrt{\log n \, d \, \tsum_{i=\rho}^{\infty}  \lambda( \bQ )^{2 i } } \leq 2  \,  \sqrt{\log n \, d \, \frac{\lambda(\bQ)^{2 \rho }}{1-\lambda(\bQ)^2}}.
     \]
     Plugging in our choices for $t_1$ and $t_2$ we end up with
     \begin{align*}
     \Lambda_1
     &\leq 2 \,  \sqrt{d \, \log n \, \frac{\lambda(\bQ)^{-1+4\log \log n/(1-\lambda(\bQ))}}{1-\lambda(\bQ)^2}} \\
     &\leq 2 \, \sqrt{d \, \log n \, \frac{ \exp( -3\log\log n)       }{1-\lambda(\bQ)^2}} \\ & \leq  2 \, \frac{1}{\log n} \, \sqrt{d \, \frac{ 1 }{1-\lambda(\bQ)^2}}\\
     &\leq \frac{2 \sqrt{d} }{\log n}\cdot \frac{ 1 }{1-\lambda(\bQ)}.
   \end{align*}

    \noindent
    With the same arguments, we get \begin{align*} \Lambda_2 &\leq \lambda( \bP^{[1,t_2]}) \, \sqrt{n} K \leq \, \sqrt{n} K/ (Kn)^2.                 \end{align*}
    Plugging all this into \thmref{upper} shows that at step $T= \tfrac{2 \log (Kn)}{1-\lambda_2(\bP)}$ the discrepancy is at most
    \begin{align*}
       & \frac{2d \, \log \log n}{1-\lambda(\bQ)} +
       \frac{3(1/2 - \alpha) \cdot 2d \, \log(Kn)}{1-\lambda(\bQ)} +  \frac{2 \sqrt{d} }{\log n}\cdot \frac{ 1 }{1-\lambda(\bQ)} + \frac{\sqrt{n} K}{(Kn)^2} \\ &= \Oh \left( \frac{d \log (Kn) }{1 - \lambda(\bQ) } \cdot \left(\frac{1}{2} - \alpha \right) + \frac{d \log \log n}{1 - \lambda(\bQ)}  \right).
    \end{align*}

\section{Deriving the Upper Bound for $\ccc_n$}\label{sec:DerivationOfCCC}

In \thmref{upper} we choose $t_2=\log n$ and $t_1=\log n - \log \log n$. For matrix multiplication
we use the following abbreviated form: $\bP^{[i,j]}:= \bP^{(i)}\bP^{(i+1)}\cdots \bP^{(j)}$.
First we observe that
\begin{itemize}
  \item  $\bP^{[k,t_2]}_{u,v}=2^k/n$ if wires $u$ and $v$ are at distance $\leq \log n-k$ and differ only in the last $\log n-k$ bits (in their
binary representation),
  \item otherwise, $\bP^{[k,t_2]}_{u,v}= 0$.
\end{itemize}
In particular this shows that $\bP^{[1,t_2]}$ is equal to the
all-$\big(\tfrac1n\big)$ matrix and thus
$\lambda \left(\bP^{[1,t_2]} \right)=0$ (that is, $\Lambda_2=0$).
Moreover, for any fixed wire $w$,
\begin{align*}
       \Lambda_1 & =  \sqrt{ \log n \,\tsum_{i=1}^{\log n - \log \log n} \tsum_{(u,v) \in \Ml} \left(
         e_{u,v} \left( \bP^{[i+1,t_2]} \right) e_w
        \right)^2}
        \\
        & = 
        \sqrt{ \log n\,\tsum_{i=1}^{\log n - \log \log n}  \tsum_{(u,v) \in \Ml}
        \left( \bP^{[i+1,t_2]}_{u,w} - \bP^{[i+1,t_2]}_{v,w}
        \right)^2}
         \\
        &\leq 
        \sqrt{ \log n \, \tsum_{i=1}^{\log n - \log \log n} \frac{n}{2^{i}} \left( \frac{2^{i+1}}{n} \right)^2} \\
        &= 4 \, \sqrt{ \log n \, \tsum_{i=1}^{\log n - \log \log n} \frac{2^i}{n}} \\
        &\leq 4 \, \sqrt{ \log n\cdot \frac{2}{\log n}} \\
        &= \Oh(1).
\end{align*}

\begin{figure}[tb]
    \centering
    \includegraphics[bb=62pt 259pt 550pt 542pt,width=0.66\textwidth,clip]{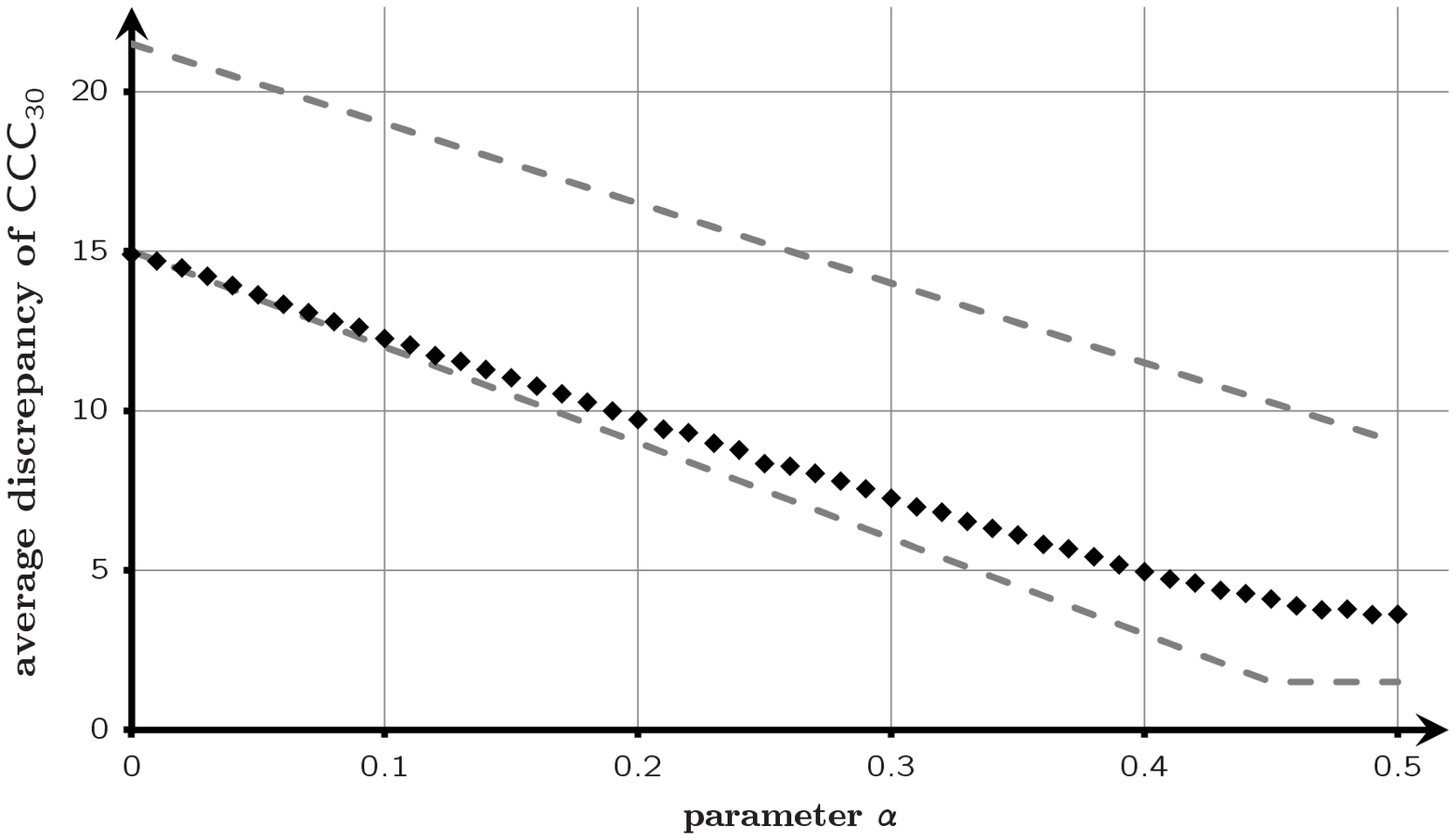}
    \caption{Discrepancy for various $\alpha$-values of $\ccc_{2^{30}}$ with random uniformly distributed input
             from $\{0,1,2,\ldots,2^{30}-1\}$.
             $\alpha=0$ corresponds to the adversarial model while
             $\alpha=1/2$ is the completely random model.
             The dotted line describes the experimental results,
             the broken lines are our theoretical lower and upper bounds.}
    \label{fig:experiments}
\end{figure}

\section{Experimental Result}
\label{sec:conclusions}

We examined experimentally how well a \ccc\ balances a random input. We 
implemented a $\ccc_{2^{30}}$ consisting of roughly one billion wires and thirty 
billion balancers.  The input was chosen independently uniformly at random from 
$\{0,1,2,\ldots,2^{30}-1\}$. For initial directions of the balancers all up and different values $\alpha$ between $0$ and $1/2$ we measured the resulted discrepancy.

\figref{experiments} presents the average over 100 runs, together
with theoretical lower and upper bounds.  As the input is random,
the so-far presented bounds can be tightened.  Following
the same lines of proof, one can easily show the
following slightly better bounds on the \emph{expected} discrepancy $\Delta$
in the random-input case:
\begin{itemize}
\item  $\Delta \leq (\frac{1}{2} - \alpha) \cdot (\log n - \lceil \log \log n \rceil) + \lceil \log \log n \rceil + 4,$
\item $\Delta \geq \max\{ (1/2 - \alpha) \log n,\;
      1/2\,(1-\tfrac1n)\,(\lfloor \log \log n \rfloor -1) \}$.
\end{itemize}
As these bounds are only used for visualization in \figref{experiments}
and the proofs are very similar to the above, they are omitted.


\newcommand{\FOCS}[2]{#1 Annual IEEE Symposium on Foundations of Computer Science (FOCS'#2)}
\newcommand{\STOC}[2]{#1 Annual ACM Symposium on Theory of Computing (STOC'#2)}
\newcommand{\SODA}[2]{#1 Annual ACM-SIAM Symposium on Discrete Algorithms (SODA'#2)}
\newcommand{\PODC}[2]{#1 Annual ACM Principles of Distributed Computing (PODC'#2)}
\newcommand{\ICALP}[2]{#1 International Colloquium on Automata, Languages, and Programming (ICALP'#2)}
\newcommand{\STACS}[2]{#1 International Symposium on Theoretical Aspects of Computer Science (STACS'#2)}
\newcommand{\SPAA}[2]{#1 ACM Symposium on Parallel Algorithms and Architectures (SPAA'#2)}
\newcommand{\MFCS}[2]{#1 International Symposium on Mathematical Foundations of Computer Science (MFCS'#2)}
\newcommand{\ISAAC}[2]{#1 International Symposium on Algorithms and Computation (ISAAC'#2)}
\newcommand{\WG}[2]{#1 Workshop of Graph-Theoretic Concepts in Computer Science (WG'#2)}
\newcommand{\SIROCCO}[2]{#1 International Colloquium on Structural Information and Communication Complexity (SIROCCO'#2)}
\newcommand{\IPDPS}[2]{#1 International Parallel and Distributed Processing Symposium (IPDPS'#2)}
\newcommand{\DISC}[2]{#1 International Symposium on Distributed Computing (DISC'#2)}
\newcommand{\RANDOM}[2]{#1 International Workshop on Randomization and Computation (RANDOM'#2)}
\newcommand{\IPPS}[2]{#1 IEEE International Parallel Processing Symposium (IPPS'#2)}

\end{document}